\newtheorem{definition}{Definition}
\newtheorem{theorem}{Theorem}
\newtheorem{proposition}{Proposition}
\newtheorem{lemma}{Lemma}
\newcommand{\mR}{\mathbb{R}}
\newcommand{\Sc}{\mathcal{S}}
\newcommand{\eps}{\varepsilon}
\newcommand{\wtil}{\widetilde}
\newcommand{\lno}{\left\lVert}
\newcommand{\rno}{\right\rVert}
\newcommand{\vnl}{\stackrel{\text{VN}}{\longrightarrow}}
\newcommand{\X}{\mathcal{X}}
\newcommand{\Y}{\mathcal{Y}}
\newcommand{\Z}{\mathcal{Z}}
\newcommand{\A}{\mathcal{A}}
\newcommand{\Pn}{P_N}
\begin{document}

\title{Linear Universal Decoding for Compound Channels: \\ a Local to Global Geometric Approach}

\author{
\begin{tabular}{c c}
    \begin{minipage}{0.5\linewidth}
        \begin{center}
        Emmanuel Abbe \\
                 Massachusetts Institute of Technology \\
               Laboratory for Information and Decision Systems\\
                Cambridge, MA 02139 \\
                eabbe@mit.edu \\
        \end{center}
    \end{minipage}
    &
    \begin{minipage}{0.5\linewidth}
        \vspace{-2pt}
        \begin{center}
        Lizhong Zheng \\
                 Massachusetts Institute of Technology\\
               Laboratory for Information and Decision Systems\\
              Cambridge, MA 02139 \\
                 lizhong@mit.edu \\
        \end{center}
    \end{minipage}
    \\
\end{tabular}
}

\maketitle



\begin{abstract}
Over discrete memoryless channels (DMC), linear decoders (maximizing additive metrics) afford several nice properties. 
In particular, if suitable encoders are employed, the use of decoding algorithm with manageable complexities is permitted. Maximum likelihood is an example of linear decoder. For a compound DMC, decoders that perform well without the channel's knowledge are required in order to achieve capacity. Several such decoders have been studied in the literature. 
However, there is no such known decoder which is linear. Hence, the problem of finding linear decoders achieving capacity for compound DMC is addressed, and it is shown that under minor concessions, such decoders exist and can be constructed. \\
This paper also develops a {\it local geometric analysis}, which allows in particular, to solve the above problem. By considering very noisy channels, the original problem is reduced, in the limit, to an inner product space problem, for which insightful solutions can be found. The local setting can then provide counterexamples to disproof claims, but also, it is shown how in this problem, results proven locally can be ``lifted'' to results proven globally.
\end{abstract}

\section{Introduction}

We consider a discrete memoryless channel with input alphabet $\X$
and output alphabet $\Y$. The channel is described by the
probability transition matrix $W$, each row of which is the 
conditional distribution of the output symbol $Y$ conditioned on a
particular input $X=x\in \X$. We are interested in the compound
channel, where the exact value of $W$ is not known, either at the
transmitter or the receiver. Such problems can often be motivated by
the wireless applications with unknown fading realizations. Here,
instead of assuming the channel $W$ to be known at the receiver and transmitter, we 
assume that a set $S$ of possible channels is known at the receiver and transmitter; and
our goal is to design encoders and decoders that support reliable
communication, no matter which channel in $S$ actually takes place.

Compound channels have been extensively studied in the literature.
In particular, Blackwell et.al. \cite{BBT} shown that the highest achievable rate is given by the following expression:
\begin{eqnarray}
C(S) \stackrel{\Delta}{=} \max_{P} \inf_{W \in S} I(P,W), \label{compound}
\end{eqnarray}
where the maximization is over all probability distributions $P$ on $\X$.
Thus, $C(S)$ is referred to as the {\it compound channel
capacity}. To achieve the capacity, i.i.d. (or fixed composition) random codes from the
optimal input distribution, i.e. the distribution maximizing \eqref{compound}, are used. The random coding argument is
commonly employed to prove achievability for a single given channel,
such as in Shannon's original paper. By showing that the error
probability averaged over the random ensemble can be made
arbitrarily small, one can conclude that there exists ``good" codes
with low enough error probability. This argument is strengthened in
\cite{BBT} to show that with the random coding argument,  we can indeed prove the existence of codes that are good for all
possible channels. Adopting this view, in this paper, we will not be
concerned about constructing the code, or even finding the optimal
input distribution, but rather simply assume that one of the above 
mentioned universally good code is used, and focus on the designs of
efficient decoding algorithms.

In \cite{BBT}, a decoder that maximizes a uniform mixture of
likelihoods over most possible channels is used, and shown to achieve
capacity. 
The most general universal decoder is the
maximum mutual information (MMI) decoder \cite{ciskon}, which
computes the empirical mutual information between each codeword and
the received word and picks the highest one. The practical
difficulty of implementing MMI decoders is obvious. As empirical
distributions are used in computing the ``score" of each codeword, it
becomes challenging to efficiently store the exponentially many
scores, and update the scores as symbols being received
sequentially. Conceptually, when the empirical distribution of the
received signals is computed, one can in principle estimate the
channel $W$, making the assumption of lack in channel knowledge less
meaningful. There has been a number of different universal decoders
proposed in the literature, including the LZ based algorithm
\cite{lapziv}, or merged likelihood decoder \cite{lapfed}. In this paper, we try to find universal decoders in a 
class of particularly simple decoders: {\it linear decoders}.

Here, linear (or additive) decoders are defined to have the
following structure. Upon receiving the $n$-symbol word
$y$, the decoder compute a score/decoding metric $d^n(x_m, y)$ (note
that the score of a codeword does not depend on other codewords) for
each codeword $x_m, m=1, 2,\ldots, 2^{nR}$, and decodes to the one
codeword with the highest score (ties can be resolved arbitrarily). Moreover, the $n$-symbol decoding
metric has the following additive structure
\begin{eqnarray*}
d^{n}(x_m, y) = \sum_{i=1}^n d(x_m(i), y(i))
\end{eqnarray*}
where $d:\X \times \Y \to \mathbb{R}$ is a (single-letter) decoding
metric. Such decoders are called linear since the decoding metric
it computes is indeed linear in the joint empirical distribution
between the codeword and the received word, since
\begin{eqnarray*}
d^{n}(x_m, y) = n \cdot \sum_{a\in \X,b\in \Y}
\hat{P}_{(x_m,y)}(a,b) \cdot d(a,b)
\end{eqnarray*}
where $\hat{P}_{(x_m,y)}$ denotes the joint empirical distribution
of $(x_m, y)$. We call such a decoder a linear decoder induced by-
the single-letter metric $d$.

Linear decoders have been widely studied in \cite{cisnar,lapmis}.
An additive decoding metric has some obvious advantages. First, when
used with appropriate codes, it allows the decoding complexity to be
reduced. Note that maximum likelihood (ML) decoder is by definition
a linear decoder, with single-letter metric $d=\log W$, the log likelihood of the
channel, thus linear decoders can potentially use the existing
decoder structures to simplify designs. For example, when
convolutional codes are used, Viterbi algorithm can be used, with
the path weight calculation replaced from the log likelihood of a
specific channel to a new metric designed for a compound set.
Moreover, additive structures are also suitable for belief
propagation algorithms. It is worth clarifying that the complexity
reduction discussed here rely on certain structured codes being
used, in the place of the random codes. In this paper, however, our
analysis will be based on the random coding argument, with the
implicit conjecture that there exists structured code resembling the
behavior of random codes under linear decoding. Mathematically, as
observed in \cite{cisnar,lapmis}, linear decoders are also
more interesting in that the geometric structure of decoders is
revealed, allowing the effects of ``mismatched" decoder to be
understood with engineering insights.

It is not surprising that for some compound channels, a linear universal decoder does not exist. In \cite{cisnar,lapmis}, it is shown that $S$ being convex and compact is a sufficient condition for the existence of linear universal decoders. In this paper, we give a more general sufficient condition for a set to admit a capacity achieving linear decoder, namely that $S$ is {\it one-sided}, following some geometric argument that will be made clear later. For more general compound sets, in order to achieve the capacity, we have to resort to a relaxed restriction of the decoders, which we call {\it generalized} linear decoders. A generalized linear decoder, for example, the well-known generalized loglikelihood ratio test (GLRT), maximizes a finite number, $K$, of decoding metrics, $d_1, d_2, \ldots, d_K$. The decoding map can then be written as
\begin{eqnarray*}
\arg\max_m \vee_{k=1}^K d^n_k(x_m, y) = \arg\max_m \vee_{k=1}^K \sum_{i=1}^n d_k( x_m(i), y(i)).
\end{eqnarray*}
Here, the receiver calculates in parallel $K$ additive metrics for each codeword, and decodes to the codeword with the highest among the total $2^{nR} \times K$ scores. In order such a generalized linear decoder to have a manageable complexity, we emphasize the restriction that $K$ has to be finite. In particular, it should not increase with the codeword length $n$.
For example the decoder proposed in \cite{BBT}, a mixture of likelihoods over all possible channels, in general might require averaging over polynomial($n$) channels. In addition, optimizing the mixture of additive metrics, i.e. $\arg\max_m \frac{1}{K} \sum_{k=1}^K d_k(x_m,y)$, cannot be solved by computing $K$ parallel additive metric optimizations: the codewords having the best scores for each of the $K$ metrics may not be the only candidates for the best score of the mixture
 of the metrics; on the other hand, if we consider a generalized linear decoder, the codewords having the best score for each of the $K$ metrics are the only one to be considered for the maximum of the $K$ metrics. 

The main result of this paper is the construction of generalized linear decoders that achieve compound channel capacity on most compound sets. As to be shown in Section \ref{sec:formulation}, this construction requires solving some rather complicated optimization problems involving the Kullback-Leibler (KL) divergence (like almost every other information theoretical problem). To obtain insights to this problem, we introduced in Section \ref{sec:vn} a special tool: local geometric analysis. In a nutshell, we focus on the special cases where the two distributions in the KL divergence are ``close" to each other, which can be thought in this context as approximating the given compound channels by very noisy channels. In this local setting, information theoretical quantities can be naturally understood as quantities in an inner product space, where conditional distributions and decoding metrics correspond to vectors; divergence and mutual information correspond to squared norms and the data rate with mismatched linear decoders can be understood with projections. The relation between these quantities can thus be understood intuitively. While the results from such local approximations only apply to the special very noisy cases, we show in Section \ref{sec:lift} that some of these results can be ``lifted" to the naturally corresponding statements about general cases. Using this approach, we derive the following main results of the paper. 
\begin{itemize}
\item First we derive a new condition on $S$ to be ``one-sided", cf. Definition \ref{onesideddef}, under which a linear decoder, which decodes using the log likelihood of the worst channel over the compound set, achieves capacity. This condition is more general than the previously known one, which requires $S$ to be convex;
\item Then, we show in our main result, that if the compound set $S$ can be written as a finite union of one sided sets, then a generalized linear decoder using the log a posteriori distribution of the worst channels of each one-sided subset achieves the compound capacity; in contrast, GLRT using these worst channels is not a universal decoder.
\end{itemize}

Besides the specific results on the compound channels, we also like to emphasize the use of the local geometric analysis. As most of multi-terminal information theory problems involve optimizations of K-L divergences, often between distributions with high dimensionality, we believe the localization method used in this paper can be a generic tool to simplify these problems. Focusing on certain special cases, this method is obviously useful in providing counterexamples to disprove conjectures. However, we also hope to convince the readers that the insights provided by the geometric analysis can be also valuable in solving the general problem. For example, our definition of one-sided sets and the use of log a posteriori distributions as decoding metrics can be seen as ``naturally" suggested by the local analysis.  

In the next section, we will start with the precise problem formulations and notations.

\section{Linearity and Universality}
\label{sec:formulation}

We consider discrete memoryless channels with input and output alphabets $\X$ and $\Y$, respectively.
The channel is often written as a probability transition matrix, $W$, of dimension $|\X | \times |\Y |$, each row of which denotes the conditional distribution of the  
output, conditioned on a specific value of the input. We are interested in the compound channel, where $W$ can be any elements of a given set $S$, referred to as the set of possible channels, or the {\it compound set}. For convenience, we assume $S$ to be compact. The value of the true channel is assumed to be fixed for the entire duration of communications, but not known to either the transmitter or the receiver; only the compound set $S$ is assumed to be known at both.

We assume that the transmitter and the receiver operates synchronously over blocks of $n$ symbols. In each block, a data message $m \in \{1, 2, \ldots, 2^{nR}\}$ is mapped by an encoder
\begin{eqnarray*}
F_n : \{1, 2, \ldots, 2^{nR}\} \mapsto \X^n
\end{eqnarray*}
to  $F_n(m)=x_m \in \X^n$, referred to as the $m^{th}$ codeword. The receiver observes the received word, drawn from the distribution
\begin{eqnarray*}
W^n(y|x_m) = \prod_{i=1}^n W(y(i)|x_m(i))
\end{eqnarray*}
and applies a decoding map
\begin{eqnarray*}
G_n: \Y^n \mapsto \{1, 2, \ldots, 2^{nR}\}.
\end{eqnarray*}

The average probability of error, averaged over a given code $(F_n, G_n)$, for a specific channel $W$, is written as
\begin{eqnarray*}
P_e(F_n, G_n, W) = \frac{1}{2^{nR}} \sum_{m=1}^{2^{nR}}  \sum_{\{y: G_n(y)\neq m\}} W^n(y|x_m).
\end{eqnarray*}

A rate $R$ is said to be achievable for the given compound set $S$ iff for any $\eps >0$, there exists a large enough block length $n$, and $(F_n, G_n)$ with rate at least $R$, such that for all $W\in S$, $P_e(F_n, G_n, W) < \eps$. The supremum of such achievable rates is called the {\it compound channel capacity}, written as $C(S)$. The following result from Blackwell et.al.  gives the compound channel capacity in general.

\begin{lemma} {\bf Compound Channel Capacity} \cite{BBT}
\begin{eqnarray}
\label{eqn:cc}
C(S) = \max_{P_X} \inf_{W\in S} I(P_X, W).
\end{eqnarray}
\end{lemma}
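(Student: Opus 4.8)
The plan is to establish the two standard halves of a coding theorem: a converse showing that no code can exceed the right-hand side of \eqref{eqn:cc}, and an achievability argument exhibiting codes that approach it uniformly over $S$. The one delicate point is that the claimed capacity is the \emph{max-min} $\max_{P_X}\inf_{W\in S} I(P_X,W)$, which by the usual saddle inequality is at most the min-max $\inf_{W\in S}\max_{P_X} I(P_X,W)$; the naive converse ``reliable for every $W\in S$ implies reliable for each $W$ separately'' only yields the weaker min-max bound, so both halves must genuinely treat the whole family $S$ jointly.

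For the converse I would fix an arbitrary code $(F_n,G_n)$ of rate $R$ that is reliable for all $W\in S$, and let $\bar P = \tfrac1n\sum_{i=1}^n P_{X_i}$ denote its average single-letter input distribution, which is determined by the encoder alone and hence does \emph{not} depend on the channel. Applying Fano's inequality for a fixed channel $W$ together with the memorylessness of $W^n$ gives $nR \le I(X^n;Y^n) + n\delta_n \le \sum_{i=1}^n I(X_i;Y_i) + n\delta_n$, and concavity of $I(P,W)$ in $P$ (for fixed $W$) lets me replace the sum by $nI(\bar P, W)$. Dividing by $n$ yields $R \le I(\bar P,W)+\delta_n$ for \emph{every} $W\in S$; since $\bar P$ is channel-independent I may take the infimum over $W$ and then the maximum over input distributions, obtaining $R \le \inf_{W\in S} I(\bar P,W) + \delta_n \le \max_{P_X}\inf_{W\in S} I(P_X,W) + \delta_n$, and letting $n\to\infty$ closes this direction.

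For achievability I would fix an optimizing input distribution $P^\ast$ attaining the max-min and draw a random codebook of $2^{nR}$ codewords of fixed composition $P^\ast$, decoded by the channel-independent MMI rule $\hat m = \arg\max_m I(\hat P_{(x_m,y)})$. The method of types gives, for each fixed $W$, an error bound of the form $P_e(F_n,G_n,W) \le \mathrm{poly}(n)\,2^{-nE(R,P^\ast,W)}$ with a random-coding exponent $E(R,P^\ast,W)>0$ precisely when $R < I(P^\ast,W)$; crucially, both the decoder and this type-counting estimate are expressed through empirical distributions only, so they hold simultaneously for all channels. Choosing $R < \inf_{W\in S} I(P^\ast,W) = \max_{P_X}\inf_{W\in S} I(P_X,W)$, continuity of $W\mapsto I(P^\ast,W)$ and compactness of $S$ make the infimum attained and strictly larger than $R$, whence $\inf_{W\in S} E(R,P^\ast,W)>0$ and the ensemble-average error vanishes uniformly.

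The main obstacle is exactly this uniformity: I must produce a \emph{single} code that is good for the entire family $S$ at once, not a different code per channel. The resolution rests on two facts working together---that the MMI decoder requires no knowledge of $W$, and that its type-based error exponent is continuous in $W$ and bounded away from zero on the compact set $S$. Concretely, I would bound the ensemble average $\sup_{W\in S}\mathbb{E}[P_e(F_n,G_n,W)]$, using a finite $\eps$-net of $S$ together with uniform continuity of $P_e(F_n,G_n,\cdot)$ for fixed codes to pass from the net to all of $S$, conclude that it is exponentially small, and extract one codebook whose worst-case error over $S$ is small; a standard expurgation step then handles the passage from average to maximal error over messages.
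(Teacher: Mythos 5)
Your proof is correct, but note that the paper offers no proof of this lemma at all --- it is quoted from \cite{BBT}, and the remark following it only sketches the original approach: a strengthened random-coding argument showing that the sets of codes that are good for each individual $W \in S$ have non-empty intersection, with achievability obtained there via a decoder maximizing a uniform mixture of likelihoods over (a fine finite quantization of) the possible channels. Your route differs on the achievability side: you use fixed-composition codes with the MMI decoder, whose method-of-types error bound depends on the channel only through an exponent that is positive whenever $R < I(P^\ast, W)$ and continuous in $W$, hence bounded away from zero on the compact set $S$. Your converse is the standard one, and you correctly isolate its crux: the averaged input distribution $\bar P$ is fixed by the encoder and channel-independent, so the bound $R \le I(\bar P, W) + \delta_n$ may be infimized over $W \in S$ \emph{before} maximizing over input distributions, which yields the max-min rather than the weaker min-max that a per-channel argument would give. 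The one step you rightly do not gloss over is the passage from ensemble-average to a single universally good code: smallness of $\sup_{W \in S} \mathbb{E}\left[P_e(F_n,G_n,W)\right]$ does not by itself produce one codebook good for all $W$ simultaneously, and your finite $\eps$-net with a union bound over the net, followed by uniform continuity of $W \mapsto P_e(F_n,G_n,W)$ for a fixed code, is exactly the standard repair --- in spirit the same device as the intersection argument of \cite{BBT}. What your approach buys is a fully channel-independent decoder and consonance with the rest of the paper, which invokes MMI universality anyway; what the original approach buys is freedom from the method-of-types machinery.
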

\vspace{.3cm}
\noindent{\it Remark:} The random coding argument is often used in proving the coding theorem for a fixed channel. By showing that the error probability, averaged over the ensemble of random codes, approaches $0$ as $n$ increases, one can draw the conclusion that there exists at least one sequence of codes, for which the probability of error, averaged over the specific codes, is driven to $0$. A similar argument is used in compound channels. Here, it is however not enough to show that the ensemble average error probability is small for every $W$. Since the ``good" codes for different channels can in principle be different, this is not enough to guarantee the existence of a single code that is universally good for all possible channels. The random coding argument is strengthened in \cite{BBT} to show that universally good code indeed exists. The approach used in \cite{BBT}, to show that the sets of good codes corresponding to every possible channel have non-empty intersection, has been used as a standard method to study compound channels. In this paper, we are focused on designing efficient decoders, which is interesting since the optimal maximum likelihood decoder is voided by the channel's law ignorance. We will not be particularly concerned about finding a good codebook, or even the optimal input distribution. To simplify our discussions, we will, for most of our results, only show that the ensemble average error probability can be made small, when decoders discussed in the paper are used. Arguments similar to that of \cite{BBT} can be used to show that the error probability can be made small when appropriately chosen codes are used.

Now before we proceed to define decoders, we need to define some notations:
\begin{itemize}
\item We always assume that we are working with the optimal input distribution $P_X$ for the considered compound set $S$, i.e.
$$P_X= \arg \max_{P } \inf_{W \in S} I(P,W)$$
(if the maximizers were not to be unique, we pick arbitrarily one of them).
Therefore, $\inf_{W \in S} I(P_X,W)$ is the compound channel capacity for a compound set $S$. However, the results in this paper can be stated for arbitrary input distributions (not necessarily optimal), the only difference would then be that we would talk about mutual informations instead of capacities.
\item For convenience, we assume that $S$ is compact. We define $W_S=\arg \min_{W \in S} I(P_X,W)$, and call it the {\it worst channel of} $S$ when the minimizer is unique; $I(P_X,W_S)$ is then the compound channel capacity for a compound set $S$.
We make the convention that each time a worst channel is considered throughout the paper for any set, the set in question is compact.
\item $W_0 \in S$ denotes the true channel;
\item For a joint distribution $\mu$ on $\X\times \Y$; $\mu_X$ and $\mu_Y$ denote respectively the $X$ and $Y$ marginal distributions; and $\mu^p =\mu_X \times \mu_Y$ the induced product distribution. Note that $\{\mu_X = P_X, \mu_Y= (\mu_0)_Y \} \Leftrightarrow \mu^p =\mu_0^p$
\item $\mu = P_X \circ W$ denotes the joint distribution with $P_X$ as the $X$ marginal distribution and $W$ as the conditional distribution. For example, the mutual information
\begin{eqnarray*}
I(P_X, W) = D( P_X \circ W \| (P_X \circ W)^p)
\end{eqnarray*}
where $D(\cdot \| \cdot)$ is the Kullback-Leibler divergence.
\end{itemize}

The decoders we consider has the following form. Upon receiving $y$, it computes, for each codeword $x_m$, a score $d^n(x_m, y)$, and decodes to the message corresponding to the highest score. Here, $d^n:  \X^n \times \Y^n\mapsto \mR$ is also called a {\it decoding metric}. Note the restriction here is that the score for codeword $x_m$ does not depend on other codewords. Such decoders are called $\alpha$-decoders in \cite{cisnar}. As an example, the maximum mutual information (MMI) decoder has a score defined as
\begin{eqnarray*}
d^n_{\sf MMI}(x_m, y) = I(\hat{P}_{(x_m, y)})
\end{eqnarray*}
where $\hat{P}$ denotes the empirical distribution. To be specific, $\forall a\in \X, b\in \Y$
\begin{eqnarray*}
\hat{P}_{(x_m, y)} (a,b) = \frac{1}{n} \left| \left\{i: (x_m(i), y(i)) = (a,b) \right\}\right|,
\end{eqnarray*}
and $I(\mu)$ denotes the mutual information, as a function of the joint distribution $\mu$ on $\X \times \Y$.

It is well known that the MMI decoder is universal; when used with the optimal code, it achieves the compound channel capacity on any compound sets. In fact, there are other advantages of the MMI decoder: it does not require the knowledge of $S$; and it achieves universally the random coding error exponent \cite{ciskon}. Despite these advantages, the practical difficulties to implement an MMI decoder prevents it from becoming a real ``universally used" decoder. As empirical distributions are used in computing the scores, it is difficult to store and update the scores, even when a structured codebook is used. The main goal of the current paper is to find linear decoders that can, like the MMI decoder, be capacity achieving on compound channels. 

\begin{definition} {\bf Linear Decoder}\\
We refer to a map
\begin{eqnarray*}
d: \X \times \Y \mapsto \mR
\end{eqnarray*}
as a single-letter metric. A linear decoder induced by $d$ is defined by the decoding mapping: 
\begin{eqnarray*}
&&G_n(y) = \arg\max_m d^n(x_m, y) \\
\mbox{ where} &&d^n(x_m, y) = \frac{1}{n}\sum_{i=1}^n d(x_m(i), y(i)) = E_{\hat{P}_{(x_m,y)}}[d]
\end{eqnarray*}
\end{definition}
Note that the reason why such decoders are called linear decoders ($d$-decoders in \cite{cisnar}) is to underline the fact that the decoding metric is additive, i.e. is a linear function of the empirical distribution $\hat{P}_{(x_m, y)}$. The decoding metric $d^n$ for any $n$ of a linear decoder is naturally defined by the single-letter metric $d$ through the additive structure.

The advantages of using linear decoders have been discussed thoroughly in \cite{cisnar,lapmis,laprev}, and also briefly in the introduction. In short, when used with structured codes, one can replace the log likelihood metric in a conventional decoder by a well designed single-letter metric. This way, with little changes in the decoder designs, one can have a decoder for the compound channel with much less complexity. 

Unfortunately, there are some examples for which no linear decoder can achieve the compound capacity. The
most well-known example is the compound set with two binary symmetric channels, with crossover probabilities of $1/4$ and $3/4$, respectively. To address the decoding challenge of these cases, we will need a slightly more general version of linear decoders.

\begin{definition} {\bf Generalized Linear Decoder}\\
Let $d_1, d_2, \ldots, d_K$ be $K$ single-letter metrics, where $K$ is a finite number. A generalized linear decoder induced by these metrics is defined by the decoding map:
\begin{eqnarray*}
G_n(y) &=& \arg\max_m \vee_{k=1}^K \sum_{i=1}^n d_k(x_m(i), y(i))\\
 &=&\arg\max_m \vee_{k=1}^K E_{\hat{P}_{(x_m,y)}}[d_k]
\end{eqnarray*}
\end{definition}
\vspace{.3cm}
Note that $\vee$ denotes the maximum, and it is crucial that $K$ is a finite number, which does not depend on the code length $n$.

As an example, the maximum likelihood decoder, of a given channel $W$, is a linear decoder induced by $$d_{\sf ML}(a,b) = \log W(b|a), \quad\forall a\in \X, b\in \Y.$$

It is well known that for a given channel $W$, the ML decoder, used with the random codes from the optimal input distribution, is capacity achieving. If the channel knowledge is imperfect, for example, the decoder uses ML rule for channel $W_1$ while the actual channel is $W_0$, the mismatch in the decoding metric causes the achievable data rate to decrease. This effect is studied in \cite{cisnar,lapmis}, the result is quoted in the following Lemma. For convenience, we also included a brief sketch of the proof.

\begin{lemma}\label{mism}\cite{cisnar,lapmis}
For a DMC $W_0$, using a random codebook with input distribution $P_X$, if the decoder is linear and induced by $d$, the following data rate can be achieved 
\begin{eqnarray}
R(P_X, W_0, d) = \inf_{\mu \in \A} D(\mu\| \mu_0^p)
\end{eqnarray}
where $\mu_0= P_X\circ W_0$, and $\mu_0^p$ is the product distribution with the same $X$ and $Y$ marginal distributions as $\mu_0$ and the optimization is over the following set of joint distributions on $\X \times \Y$,
\begin{eqnarray}
\A =  \{\mu: \mu_X = P_X, \mu_Y= (\mu_0)_Y, E_{\mu} [d] \geq E_{\mu_0}[d] \}. \label{misset}
\end{eqnarray}
\end{lemma}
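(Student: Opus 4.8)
The plan is to prove the claim by a standard random-coding argument combined with the method of types, drawing the random codebook with constant composition $P_X$. This is the natural setting because then every codeword has type exactly $P_X$, which automatically enforces the constraint $\mu_X = P_X$ appearing in $\A$, and it makes $D(\mu\|\mu_0^p)$ coincide with the mutual information $I(\mu)$ for every $\mu \in \A$ (since on $\A$ we have $\mu_0^p = P_X \times (\mu_0)_Y = \mu_X \times \mu_Y$). First I would fix, without loss of generality, that message $1$ is sent, so the received word is $y \sim W_0^n(\cdot\,|\,x_1)$. By the law of large numbers the joint empirical distribution $\hat P_{(x_1,y)}$ concentrates on $\mu_0 = P_X\circ W_0$; hence the true codeword's score $d^n(x_1,y) = E_{\hat P_{(x_1,y)}}[d]$ concentrates on $E_{\mu_0}[d]$, and the $Y$-type of $y$ concentrates on $(\mu_0)_Y$. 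This lets me replace the random decoding threshold by the deterministic value $E_{\mu_0}[d]$, up to a vanishing slack $\eps$.

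Second, I would analyze a single competing codeword. Since each $x_m$, $m\neq 1$, is independent of $y$ and has composition $P_X$, the method of types gives, conditioned on $y$ having $Y$-type $(\mu_0)_Y$,
$$\Prob(\hat P_{(x_m,y)} = \mu) \eqe \exp\bigl(-n\, D(\mu \| \mu_0^p)\bigr), \qquad \mu_X = P_X,\ \mu_Y = (\mu_0)_Y,$$
where $\mu$ ranges over admissible joint types and the exponent equals $I(\mu)$. A competitor beats the true codeword precisely when its score $E_\mu[d]$ is at least $E_{\mu_0}[d]$, i.e. exactly when $\mu$ lands in the set $\A$ of \eqref{misset}. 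Summing over the polynomially many relevant types and applying Laplace's principle, the probability that a fixed competitor wins is $\eqe \exp\bigl(-n \inf_{\mu\in\A} D(\mu\|\mu_0^p)\bigr)$.

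Third, a union bound over the $\approx 2^{nR}$ competing codewords yields $P_e \lesssim \exp\bigl(n(R - \inf_{\mu\in\A} D(\mu\|\mu_0^p))\bigr)$, so that every rate below $\inf_{\mu\in\A} D(\mu\|\mu_0^p)$ is achievable once the slack $\eps$ is sent to $0$; this establishes the ``$\geq$'' direction. The main obstacle is the matching converse showing the rate cannot exceed $\inf_{\mu\in\A} D(\mu\|\mu_0^p)$, where the union bound is no longer adequate: instead I would lower-bound the error probability by observing that when $R > \inf_{\mu\in\A} D$ the expected number of competing codewords with joint type in $\A$ is $\eqe \exp\bigl(n(R-\inf_{\mu\in\A}D)\bigr)$ and hence grows exponentially, then invoke a second-moment argument to conclude that at least one such codeword exists with high probability, forcing a decoding error. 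Care must also be taken that the large-deviation estimates hold uniformly over the continuum of admissible $\mu$, which is handled by the polynomial bound on the number of types, and that the boundary case $E_\mu[d] = E_{\mu_0}[d]$ contributes negligibly, which follows from continuity of $\mu \mapsto D(\mu\|\mu_0^p)$ on the closed, compact set $\A$.
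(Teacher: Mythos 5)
Your achievability argument is essentially the paper's own proof: concentrate the true codeword's score near $E_{\mu_0}[d]$ by typicality, bound the probability that an independent competitor's joint type lands in $\A$ by $\exp\bigl(-n\,D(\mu\|\mu_0^p)\bigr)$ via types/large deviations, sum over the polynomially many types, and finish with a union bound over the $2^{nR}$ competitors. One genuine (minor) difference works in your favor: you draw a constant-composition codebook, which enforces $\mu_X = P_X$ exactly, whereas the paper draws codewords i.i.d.\ from $P_X$; under i.i.d.\ coding, competing joint types with $\mu_X \neq P_X$ also occur (their deviation cost is absorbed into $D(\mu\|\mu_0^p)$), so strictly speaking the infimum should then run over a larger set than $\A$. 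Your fixed-composition choice is the cleaner way to obtain the constraint set exactly as written in \eqref{misset}.

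The part of your proposal that misses the mark is what you call ``the main obstacle'': the matching converse. The lemma claims only that the rate $\inf_{\mu\in\A} D(\mu\|\mu_0^p)$ \emph{can be achieved}; the equality sign defines the notation $R(P_X,W_0,d)$, it does not assert that no higher rate is achievable. Indeed, the paper's remark immediately following the lemma states that this expression is in general \emph{not} the highest achievable rate under mismatch (it is only ensemble-tight for randomly drawn codes, and coincides with the mismatch capacity only for binary input alphabets). So a converse in the strong sense you would need ``the rate cannot exceed $\inf_{\mu\in\A} D$'' is actually false for general alphabets; what your second-moment sketch would establish is ensemble tightness, which is a true statement but is not required here. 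Dropping that half entirely leaves a complete and correct proof that matches the paper's.
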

\vspace{.2cm}
As discussed in \cite{lapmis}, this expression, even for the optimal $P_X$, does not give in general the highest achievable rate under the mismatched scenario. If the input alphabet is binary, it does so, otherwise it only gives the highest rate that can be achieved for codes that are drawn in a random ensemble.

\begin{proof}
This is a simple application of large deviations. By a typicality argument, the transmitted codeword, say, $x_1$, and the received word $y$ have joint empirical distribution close to $\mu_0$, and thus has a score
$$ d^n (x_1, y) > E_{\mu_0}[d] - \delta := \gamma$$
for an arbitrarily small $\delta>0 $ with a high probability when $n$ is large enough. Now an error occurs only if there is an incorrect codeword, whose score is above $\gamma$. For a particular codeword, $x_2$, this occurs with probability
\begin{eqnarray*}
P( d^n(x_2, y) > \gamma) \leq \exp\left[-n \left(\min_{\mu: E_\mu[d] > \gamma} D(\mu\|\mu_0^p)-\delta\right) \right],
\end{eqnarray*}
using the fact that $x_2$ is independent of $y$ with an i.i.d. $P_X$ distribution. The optimization is over the joint distributions $\mu$ with the correct $X$ and $Y$ marginal distributions. Now applying union bound, the probability
$$ P( \exists i \neq 1, \mbox{ s.t. } d^n(x_i, y) > \gamma) \leq  2^{nR} \cdot  P( d^n(x_2, y) > \gamma).$$ 
Moreover, the empirical distribution of $x_2, y$ is arbitrarily close to $\mu_0^p$ with probability one. 
Hence, if $R <  R(P_X, W_0, d)$ as defined in the lemma's statement, the above probability can be made arbitrarily small by taking $\delta$ small enough.
\end{proof}

With a similar proof as for previous result, the following lemma can also be proved.
\begin{lemma}
When the true channel is $W_0$ and a generalized linear decoder induced by the single-letter metrics $\{d_k\}_{k=1}^K$ is used, we can achieve the following rate
\begin{equation}
R(P_X, W_0, \{d_k\}_{k=1}^K) = \min_{\mu \in \A} D(\mu \| \mu_0^p) \label{defR}
\end{equation}
where
\begin{eqnarray*}
 \A &=& \{ \mu: \mu_X = P_X, \mu_Y = (\mu_0)_Y, \\
 && \qquad\vee_{k=1}^K E_\mu[d_k] > \vee_{k=1}^K E_{\mu_0}[d_k] \}
\end{eqnarray*}
\end{lemma}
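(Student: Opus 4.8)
The plan is to mirror the large-deviations argument of Lemma \ref{mism}, changing only the two places where the single metric $d$ is replaced by the finite family $\{d_k\}_{k=1}^K$. First I would show the correct codeword is ``safe''. By the same typicality argument, the joint empirical distribution $\hat{P}_{(x_1,y)}$ of the transmitted codeword $x_1$ and the received word $y$ concentrates around $\mu_0 = P_X \circ W_0$, so each of the finitely many averages $E_{\hat{P}_{(x_1,y)}}[d_k]$ is within $\delta$ of $E_{\mu_0}[d_k]$ with high probability. Since the maximum of finitely many continuous functionals is continuous, the score $\vee_{k=1}^K E_{\hat{P}_{(x_1,y)}}[d_k]$ then exceeds $\vee_{k=1}^K E_{\mu_0}[d_k] - \delta =: \gamma$ with probability approaching one.

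Next I would bound the probability that a fixed incorrect codeword $x_2$, independent of $y$ and drawn i.i.d.\ from $P_X$, beats this threshold. The key observation is that the offending event decomposes as a finite union,
\begin{eqnarray*}
\left\{ \vee_{k=1}^K E_{\hat{P}_{(x_2,y)}}[d_k] > \gamma \right\} = \bigcup_{k=1}^K \left\{ E_{\hat{P}_{(x_2,y)}}[d_k] > \gamma \right\}.
\end{eqnarray*}
Applying large deviations (Sanov's theorem), exactly as in Lemma \ref{mism}, to each of the $K$ events, and using that the empirical distribution of $(x_2,y)$ concentrates on $\mu_0^p$ with marginals $\mu_X = P_X$ and $\mu_Y = (\mu_0)_Y$, I obtain for each $k$ an exponential bound with exponent $\min_{\mu:\, \mu_X = P_X,\, \mu_Y = (\mu_0)_Y,\, E_\mu[d_k] > \gamma} D(\mu\|\mu_0^p)$. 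A union bound over the $K$ events multiplies by $K$, and since $K$ is finite and independent of $n$ this prefactor is subexponential; the combined exponent is therefore the minimum over $k$ of the individual exponents.

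It remains to identify this minimum with the quantity in the statement. Here I would use that $\vee_k E_\mu[d_k] > \vee_k E_{\mu_0}[d_k]$ holds precisely when $E_\mu[d_k] > \vee_j E_{\mu_0}[d_j]$ for some $k$, so that (letting $\delta \to 0$, hence $\gamma \to \vee_j E_{\mu_0}[d_j]$) the set $\A$ is the union $\bigcup_k \A_k$ with $\A_k = \{\mu:\, \mu_X = P_X,\, \mu_Y = (\mu_0)_Y,\, E_\mu[d_k] > \vee_j E_{\mu_0}[d_j]\}$. Minimizing $D(\mu\|\mu_0^p)$ over a union equals taking the minimum over $k$ of the per-$k$ minima, which is exactly the exponent found above. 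A final union bound over the $2^{nR}$ incorrect codewords shows the ensemble error probability vanishes whenever $R < \min_{\mu \in \A} D(\mu\|\mu_0^p)$, as claimed.

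The step I expect to require the most care --- though it is precisely the point of restricting to generalized rather than arbitrary mixtures of linear metrics --- is the union-bound estimate of the second paragraph: it is only because $K$ is a fixed finite constant that the factor $K$ leaves the error exponent unchanged. Were $K$ permitted to grow with $n$, this prefactor could erode the exponent and the clean formula $\min_{\mu \in \A} D(\mu\|\mu_0^p)$ would no longer follow.
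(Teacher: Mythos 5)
Your proposal is correct and follows essentially the same route as the paper: the paper's proof is precisely the observation that the large-deviations argument of Lemma~\ref{mism} carries over, with the union over the $K$ metrics (and the resulting decomposition $\A=\bigcup_k \A_k$, which the paper records as \eqref{projs}) costing only a constant factor $K$ in the union bound. You have merely written out explicitly the steps the paper leaves implicit, including the correct emphasis that $K$ being fixed and independent of $n$ is what preserves the exponent.
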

\vspace{.3cm}
Note that $R(P_X, W_0, \{d_k\}_{k=1}^K)$ can equivalently be expressed as
\begin{equation}
R(P_X, W_0, \{d_k\}_{k=1}^K) = \min_{\mu \in \A_1} D(\mu \| \mu_0^p) \wedge\ldots \wedge \min_{\mu \in \A_K} D(\mu \| \mu_0^p) \label{projs}
\end{equation}
where
\begin{eqnarray*}
 \A_k &=& \{ \mu: \mu_X = P_X, \mu_Y = (\mu_0)_Y, \\
 && \qquad  E_\mu[d_k] > \vee_{j=1}^K E_{\mu_0}[d_j] \}, \quad \quad \forall 1 \leq k \leq K.
\end{eqnarray*}

Now we are ready for the main problem studied in this paper. For any given compound set $S$, let the compound channel capacity be $C(S)$ and the corresponding optimal input distribution be $P_X$. We would like to find $K$ and $d_1, \ldots, d_K$, such that
$$R(P_X, W_0, \{d_k\}_{k=1}^K) \geq C(S)$$ for every $W_0 \in S$.\\
%

If this holds, the generalized decoder induced by the metrics $\{d_k\}_{k=1}^K$ is capacity achieving on the compound set $S$ (i.e., using analogue arguments as for the achievability proof of the compound capacity in \cite{BBT}, there exists a code book that makes the overall coding scheme capacity achieving).


\section{The Local Geometric Analysis}
\label{sec:vn}
We know that the divergence is not a distance between two
distributions. However, if its two arguments are close enough, the
divergence is approximately a squared norm, namely for any probability distribution $p$ on $\Z$ (where $\Z$ is any alphabet) and for any $v$ s.t. $\sum_z
v(z)p(z)=0$, we have
\begin{eqnarray}
\label{eqn:local}
D(p(1 +\eps v) \| p) = \frac{1}{2}\eps^2 \sum_{z \in \Z} v^2(z) p(z)  + o(\eps^2).
\end{eqnarray}
This is the main tool used in this section. For convenience, we
define
\begin{eqnarray*}
\|v\|^2_p= \sum_{z\in \Z} v^2(z) p(z)
\end{eqnarray*}
which is the squared $l_2$-norm of $v$, with weight measure $p$.
Similarly, we can define the weighted inner product,
\begin{eqnarray*}
\langle u, v\rangle_p = \sum_{z\in \Z}  u(z) v(z) p(z)
\end{eqnarray*}
With these notations, one can write the approximation
(\ref{eqn:local}) as $$D(p(1+\eps v)\|p)=
\frac{\eps^2}{2} \|v\|^2_p + o(\eps^2)$$
Ignoring the higher order term, the above approximation can greatly
simplify many optimization problems involving K-L divergences. In information theoretic problems dealing with discrete channels, such
approximation is tight for some special cases such as when the
channel is very noisy.

\noindent
In general, very noisy channel means that the channel output weakly
depends on the input. If the conditional probability of observing
any output does not depend on the input (i.e. the transition
probability matrix has constant columns), we have a ``pure noise''
channel. So a very noisy channel should be somehow close to such a
pure noise channel. Formally, we consider the following family of
channels:
$$ W_{\eps}(b|a)= \Pn(b)(1+\eps L(a,b)), $$
where $L$ satisfies for any $a \in \X$
\begin{eqnarray}
 \sum_{b \in \Y} L(a,b)\Pn(b) = 0. \label{constrL}
\end{eqnarray}
\noindent
We say that $W_{\eps}$ is a very noisy channel if $\eps \ll
1$. In this case, the conditional distribution of the output,
conditioned on any input symbol, is close to a distribution $\Pn$ (on $\Y$),
which can be thought as the distribution of pure noise.
Each of these channels, $W_\eps(\cdot| \cdot)$, can
be viewed as a perturbation from a pure noise channel $\Pn$, along
the direction specified by $L(\cdot, \cdot)$.

This way of defining very noisy channel can be found in
\cite{telvn,gallager1}. In fact, there are many other possible ways to
describe a perturbation of distribution. For example, readers
familiar with \cite{ama} might feel it natural to perturb
distributions along exponential families. Since we are interested
only in small perturbations, it is not hard to verify that these
different definitions are indeed equivalent.

When an input distribution $P_X$ is chosen, the corresponding output
distribution, over the very noisy channel, can be written as,
$\forall b\in \Y$,
\begin{eqnarray*}
P_{Y,\eps}(b)&=& \sum_{a\in \X} P_X(a) W_\eps(b|a) \\
&=& \Pn(b) \left( 1+\eps \sum_a P_X(a)L(a, b) \right)\\
&=& \Pn(b) (1+ \eps \bar{L}(b))
\end{eqnarray*}
where $\bar{L}(b) = \sum_a P_X(a)L(a, b)$, $\forall a \in \X$.\\
Hence, a codeword which is sent and the received output have components which are i.i.d. from the following distribution
$$P_X \circ W_\eps = P_X  \Pn (1+\eps L ),$$
and similarly, the codeword which is not sent and the received output have components which are i.i.d. from the following distribution
$$ (P_X \circ W_\eps )^p=P_X  \Pn (1+\eps \bar{L}).$$
Therefore, the mutual information for very noisy channels is given by
\begin{eqnarray*}
I(P_X, W_\eps) &=& D(P_X  \Pn (1+\eps L ) \| P_X  \Pn (1+\eps \bar{L})) \\
&=& \frac{\eps^2}{2} \| \widetilde{L}\|^2  + o(\eps^2),
\end{eqnarray*}
where
\begin{eqnarray*}
\| \cdot \|=\| \cdot \| _{P_X \times \Pn}
\end{eqnarray*}
and
\begin{eqnarray*}
\widetilde{L}(a,b) \stackrel{\Delta}{=} L(a,b) - \bar{L}(b),
\end{eqnarray*}
 which we call the centered directions.

\subsection{Very Noisy with Mismatched Decoder}
As stated in Lemma \ref{mism}, for an input distribution $P_X$, a mismatched linear decoder induced by the metric $d$, when the true channel is $W_0$, can achieve the following rate
$$\inf_{\mu \in \A} D(\mu\|\mu_0^p)$$
where
$$\A=\{\mu : \,  \mu_X=P_X,\, \mu_Y=(\mu_0)_Y,  E_{\mu} [d] \geq  E_{\mu_0} [d] \}.$$
Now, if the channels are very noisy, this achievable rate can be expressed in the following simple form.
\begin{proposition}\label{vnmismprop}
Let $W_{0,\eps}=\Pn(1+ \eps L_0)$ and $d_{\eps}=\log
W_{1,\eps}$, where $W_{1,\eps}=\Pn(1+ \eps L_1)$. For a
given input distribution $P_X$, we can achieve the following rate
\begin{eqnarray*}
\lim_{\eps \rightarrow 0} \frac{2}{\eps^2}
R(P_X, W_{0,\eps}, d_{\eps})
=
 \begin{cases}
 \frac{\langle
\widetilde{L}_0,\widetilde{L}_1\rangle^2}{\|\widetilde{L}_1 \|^2}, 
&   \text{when} \,  \langle
\widetilde{L}_0,\widetilde{L}_1\rangle \geq 0 \\ 
0, & \text{otherwise}.
\end{cases} 
\end{eqnarray*}
\end{proposition}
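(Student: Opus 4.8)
The plan is to linearize everything around the pure-noise point $p = P_X \Pn$ and reduce the divergence minimization of Lemma \ref{mism} to a norm minimization in the inner product space of centered directions, where the answer is a one-line projection. First I would parameterize every feasible joint law as a perturbation $\mu = P_X \Pn(1 + \eps M)$, so that $M$ plays the role of a direction; recording that $\mu_0^p = P_X\Pn(1+\eps\bar L_0)$, the three constraints defining $\A$ then read (i) $\sum_b \Pn(b) M(a,b) = 0$ for all $a$ (from $\mu_X = P_X$), (ii) $\bar M = \bar L_0$ with $\bar M(b) = \sum_a P_X(a)M(a,b)$ (from $\mu_Y = (\mu_0)_Y$, since $(\mu_0)_Y = \Pn(1+\eps\bar L_0)$), and (iii) $E_\mu[d_\eps] \geq E_{\mu_0}[d_\eps]$. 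A short check using \eqref{constrL} shows that the centered direction $\widetilde M = M - \bar M$ lies in the space $V$ of doubly centered functions (zero $\Pn$-row sums and zero $P_X$-column sums), the same space containing $\widetilde L_0$ and $\widetilde L_1$, and that constraints (i) and (ii) are equivalent to fixing $\bar M = \bar L_0$ while letting $\widetilde M$ range freely over $V$.

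Next I would compute the objective to leading order. Writing $\mu = \mu_0^p(1 + \eps(M - \bar L_0) + O(\eps^2))$ and applying the local approximation \eqref{eqn:local} with base point $\mu_0^p$ gives $D(\mu\|\mu_0^p) = \frac{\eps^2}{2}\|M - \bar L_0\|^2 + o(\eps^2)$; since constraint (ii) forces $M - \bar L_0 = M - \bar M = \widetilde M$, this is exactly $\frac{\eps^2}{2}\|\widetilde M\|^2 + o(\eps^2)$. For the inequality, expanding $d_\eps = \log\Pn + \eps L_1 + O(\eps^2)$ and using $\mu - \mu_0 = \eps P_X\Pn(M - L_0)$, the $O(\eps)$ term of $E_\mu[d_\eps] - E_{\mu_0}[d_\eps]$ is proportional to $\sum_b \Pn(b)\log\Pn(b)\,(\bar M - \bar L_0)$, which vanishes by (ii); the leading surviving term is $\eps^2\langle M - L_0, L_1\rangle$, and centering (again via (ii)) lets me replace it by $\eps^2\langle\widetilde M - \widetilde L_0, \widetilde L_1\rangle$. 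Hence (iii) becomes, in the limit, $\langle\widetilde M, \widetilde L_1\rangle \geq \langle\widetilde L_0, \widetilde L_1\rangle$.

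The limiting problem is to minimize $\|\widetilde M\|^2$ over $\widetilde M \in V$ subject to $\langle\widetilde M, \widetilde L_1\rangle \geq \langle\widetilde L_0,\widetilde L_1\rangle =: c$, with $\widetilde L_1 \in V$. If $c \leq 0$ then $\widetilde M = 0$ is feasible and the minimum is $0$; if $c > 0$ the minimizer is the projection onto the bounding hyperplane, $\widetilde M = (c/\|\widetilde L_1\|^2)\widetilde L_1$, giving minimum value $c^2/\|\widetilde L_1\|^2$. This is exactly the claimed two-case formula.

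The main obstacle is rigorously interchanging $\lim_{\eps\to0}$ with the infimum, i.e. proving epi-convergence of the rescaled functionals rather than merely matching formal leading orders. For the upper bound I would build a recovery sequence from the limiting optimizer $\widetilde M^{*}$, nudged slightly into the strict interior of the constraint so that the $o(\eps^2)$ error in (iii) cannot render it infeasible, then send the nudge to zero after $\eps$. For the lower bound I must show near-minimizers cannot escape to large $\|\widetilde M_\eps\|$ (which follows since $D(\mu\|\mu_0^p)$ is forced to be $O(\eps^2)$) and that the Taylor remainders are uniform over the resulting bounded set of directions. Controlling these remainders uniformly, together with the non-strict-inequality boundary adjustment, is the only genuinely delicate part; the geometry itself is a single Cauchy--Schwarz step.
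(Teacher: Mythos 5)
Your proposal is correct and takes essentially the same route as the paper's own proof: you parameterize feasible joint laws as perturbations $P_X \Pn(1+\eps M)$, translate the two marginal constraints and the metric constraint into the space of centered directions, and solve the limiting problem $\inf\{\|\widetilde{M}\|^2 : \langle \widetilde{M},\widetilde{L}_1\rangle \geq \langle \widetilde{L}_0,\widetilde{L}_1\rangle\}$ by projection, exactly as the paper does. The only addition is your explicit treatment of the limit/infimum interchange (recovery sequences, uniform control of Taylor remainders), a point the paper's proof passes over silently; this strengthens rather than alters the argument.
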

\vspace{.3cm}
Note that it is w.l.o.g. to consider the single-letter metric to be the log of a channel, however, we do restrict all channels to be around a common $\Pn$ distribution.\\

Previous result says that the mismatched mutual information obtained
when decoding with the linear decoder induced by the mismatched
metric $\log W_{1,\eps}$, whereas the true channel is
$W_{0,\eps}$, is approximately the projections' squared norm of
the true channel centered direction $\widetilde{L}_0$ onto the mismatched
centered direction $\widetilde{L}_1$. This result gives an intuitive picture
of the mismatched mutual information, as expected, if the decoder is
matched, i.e. $\widetilde{L}_0=\widetilde{L}_1$, the projections'
squared norm is $\|\widetilde{L}_0\|^2$, which is the very noisy
mutual information of $\widetilde{L}_0$; and the more orthogonal
$\widetilde{L}_1$ is to $\widetilde{L}_0$, the more mismatched the
decoder is, with a lower achievable rate (eventually 0).

\begin{proof}
For each $\eps$, the minimizer $\mu_\eps$ can be expressed as
$$\mu_\eps = P_X \Pn (1+ \eps L) $$
where $L$ is a function on $\X \times \Y$, satisfying
\begin{eqnarray*}
\sum_{a\in \X, b\in \Y} P_X(a)  \Pn(b) L(a,b) = 0
\end{eqnarray*}
and the two marginal constraints, resp.
\begin{align}
\label{constr1}
&\left(\mu_\eps\right)_X = P_X \Longleftrightarrow  \notag \\
&\sum_{b\in \Y} \Pn(b) L(a,b)=0, \forall a\in \X\\
\label{constr2}
&\left(\mu_\eps\right)_Y = \left(\mu_0\right)_Y
\Longleftrightarrow \nonumber\\
&\sum_{a\in\X} P_X(a) L(a,b) = \sum_{a\in\X} P_X(a) L_0(a,b), \forall
b\in \Y
\end{align}
Now the constraint $E_{\mu}[\log W_{1,\eps}] \geq E_{\mu_0}[\log
W_{1,\eps}]$ can be written as
\begin{eqnarray*}
&& \sum_{a \in \X,b \in \Y} P_X(a)\Pn(b)(1+\eps L(a,b))\\
 &&\qquad \cdot[\log \Pn + \log(1+\eps L_1(x,y))] \\
 &\geq&  \sum_{a \in \X,b \in \Y} P_X(a)\Pn(b)(1+\eps L_0(a,b))\\
 &&\qquad \cdot[\log \Pn + \log(1+\eps L_1(x,y))].
\end{eqnarray*}
Using a first order Taylor expansion for the two $\log$ terms, and
the marginal constraint \eqref{constr2}, we have that previous constraint
is equivalent to
\begin{eqnarray}
\langle L,L_1 \rangle  \geq \langle L_0, L_1\rangle + o(1), \label{constr3}
\end{eqnarray}
where 
\begin{eqnarray}
\langle \cdot, \cdot \rangle = \langle \cdot, \cdot \rangle_{P_X \times \Pn} .
\end{eqnarray}

Finally, we can write the objective function as
\begin{eqnarray*}
D(\mu_{\eps}\|\mu_{0, \eps}^p) &=& D\left(P_X\Pn(1+
\eps L) \| P_X \Pn(1+ \eps \bar{L}_0)\right)\\
&=& \frac{\eps^2}{2} \left\| L- \bar{L}_0 \right\|^2_{P_X\times
\Pn} + o(\eps^2)
\end{eqnarray*}

So we have transformed the original optimization problem into the
very noisy setting
\begin{eqnarray}
\lim_{\eps\to 0} \frac{2}{\eps^2} \inf_{\mu\in \A} D( \mu\|
\mu_{0, \eps}^p) =\inf_{L: \langle L, L_1\rangle \geq \langle
L_0, L_1\rangle} \lno L-\bar{L}_0 \rno^2 \label{optim}
\end{eqnarray}
where the optimization on the RHS is over $L$ satisfying the
marginal constraints \eqref{constr1} and \eqref{constr2}. 

Now this optimization can be further simplified. By noticing that
\eqref{constr2} implies $\bar{L}= \bar{L}_0$, we have that $L- \bar{L}_0
= L - \bar{L}$, which we defined to be $\widetilde{L}$. So $\widetilde{L}$ satisfies both marginal constraints and the constraint in \eqref{optim} becomes
\begin{eqnarray*}
\langle L, L_1\rangle \geq \langle
L_0, L_1\rangle &\Leftrightarrow& \langle \widetilde{L}, L_1 \rangle
\geq \langle L_0, L_1\rangle - \langle \bar{L}_0, \bar{L}_
1\rangle \\
&\Leftrightarrow& \langle\widetilde{L}, \widetilde{L}_1\rangle \geq
\langle\widetilde{L}_0, \widetilde{L}_1\rangle
\end{eqnarray*}
That is, both the objective and the constraint functions are now
written in terms of centered directions, $\widetilde{L}$. Hence, \eqref{optim} becomes
$$ \inf_{\wtil{L}: \langle \wtil{L}, \wtil{L}_1\rangle \geq \langle
\wtil{L}_0, \wtil{L}_1\rangle} \| \wtil{L} \|^2 $$
and we can simply recognize that, if $\langle \widetilde{L}_0,\widetilde{L}_1\rangle \geq 0$, the minimizer of this expression is obtained by the projection of $\wtil{L}_0$ onto $\wtil{L}_1$, with a minimum given by the projections' squared norm:
$$ \frac{\langle \widetilde{L}_0,\widetilde{L}_1\rangle^2}{\|\widetilde{L}_1\|^2},
$$
otherwise, if $\langle \widetilde{L}_0,\widetilde{L}_1\rangle < 0$, the minimizer is $\wtil{L}=0$, leading to a zero rate. 
\end{proof}

\noindent
{\it Remark:} We have just seen two examples where in the very noisy limit, information theoretic quantities have a natural geometric meaning, in the previously described inner product space. The cases treated in this section are the ones relevant for the paper's problem, however, following similar expansions, other information theoretic problems, in particular multi-user ones (e.g. broadcast or interference channels) can also be treated in this geometrical setting.
To simplify the notation, since the very noisy expressions scale with $\eps^2$ and have a factor $\frac{1}{2}$ in the limit, we denote by $\vnl$ the following operator:
$$T(\eps) \vnl  \lim_{\eps \searrow 0} \frac{2}{\eps^2} T(\eps).$$
We use the abbreviation VN for very noisy. Note that the main reason
why we use the VN limit in this paper is similar somehow to the reason why
we consider infinite block length in information theory: it gives us
a simpler model to analyze and helps us understanding the more
complex (not necessarily very noisy) general model. This makes the
VN limit more than just an approximation
for a specific regime of interest, it makes it an analysis tool of our problems, by setting them in a geometric framework where notion of distance and angles are this time well defined. Moreover, as we will show in section
\ref{lifting}, in some cases, results proven in the VN limit can
in fact be ``lifted" to results proven in the general cases.

\section{Linear Decoding for Compound Channel:\\ the Very Noisy Case}\label{ludvn}
In this section, we will study a special case of the compound
channel, the very noisy case. The local geometric analysis
introduced in the previous section can be immediately applied to
such problems. Throughout this process, we will develop a few
important concepts that will be used in solving the general compound
channel problems, in section \ref{lifting}. In the following, we
first make clear of our assumptions, and introduce some notations.
\begin{itemize}
\item
All the channels are very noisy, with the same pure noise
distribution. That is, all considered channels are of the form
\begin{eqnarray*}
W_{\eps} (b|a) = \Pn (b) (1+ \eps L(a,b)), \quad
\forall a\in \X, b\in \Y
\end{eqnarray*}
where $L$ satisfies $\sum_b \Pn(b) L(a,b)=0, \forall a$.
The compound set is hence depending on $\eps$, and is expressed as
$S_\eps = \{ \Pn(1+\eps L) | L \in \Sc  \}$, where $\Sc$
is the set of all possible directions. Hence, $\Sc$ together with the pure noise distribution $\Pn$, completely determine the compound set for any $\eps$. We refer to $\Sc$ as the compound set in the VN setting.
Note that $\Sc$ being convex, resp. compact, is the sufficient and
necessary condition that $S_\eps$ is convex, resp. compact, for
all $\eps$.
\item $P_X$ is fixed (it is the optimal input distribution) and we write $$\mu_{\eps} = P_X \Pn (
1+ \eps L), L \in \Sc$$ as the joint distribution of the input and
output over a particular channel.
For a given channel $W_{
\eps}$, the output distribution is $\Pn(1+ \eps \bar{L})$,
where
\begin{eqnarray*}
\bar{L}(b) = \sum_{a\in \X} L(a, b) P_X(a), \quad \forall b\in
\Y
\end{eqnarray*}
and as before, $\widetilde{L}= L- \bar{L}$.
We then denote $\widetilde{\Sc}= \{\widetilde{L}: L\in \Sc\}$. Again, the
convexity and compactness of $\Sc$ is equivalent to those of $\widetilde{\Sc}$. The only difference is that $\Sc$ depends on the channels only, whereas $\widetilde{\Sc}$ depends on the input
distribution as well. As we fix $P_X$ in this section, we use the conditions
$L\in \Sc$ and $\widetilde{L} \in \widetilde{\Sc}$ exchangeably.
\item As a convention, we often give an index, $j$, to the possible
channels, and we naturally associate the channel index (the joint distribution index) and the direction index, i.e. $W_{j, \eps} = \Pn ( 1+
\eps L_j)$ and $\mu_{j,\eps} = P_X \Pn (
1+ \eps L_j)$. In particular, we reserve $W_{0, \eps} = \Pn ( 1+
\eps L_0)$ for the true channel and use other indices, $L_1,
L_2,$ etc. for other specific channels.

\item If one considers the metrics to be the $\log$ of some channels, i.e., $d_j=\log W_{j, \eps}$,
$$ d_{j,\eps} = \log W_{j,\eps} =\log(\Pn) + \log(1+ \eps L_j) .$$
In general, the single-letter
decoding metric $d$ does not have to be the log likelihood of a
channel; and even if it is, the channel $W_{j,
\eps}$ does not have to be in the compound set.
\item We write all inner products and norms as weighted by $P_X\times
\Pn$, and omit the subscript:
$$ \langle \cdot, \cdot \rangle = \langle \cdot, \cdot \rangle_{P_X \times \Pn}. $$
\item Finally, 
$$ \min_{W \in S_\eps} I(P_X,W) = \frac{\eps^2}{2} \min_{L \in \Sc} \|\wtil{L} \|^2 + o(\eps^2)$$
and we define
$$L_\Sc= \arg \min_{L \in \Sc} \|\wtil{L} \|^2 ,$$
to be the worst direction and $ \lVert \wtil{L}_\Sc \rVert^2 $ is referred to as the very noisy compound channel capacity (on $\Sc$). 
\end{itemize}
We conclude this section with the following lemma, which will be frequently used in the subsequent.
\begin{lemma}\label{theid}
Let $L_i, L_j, L_k$ and $L_l$ be four directions and assume that $\sum_a P_X (a) L_i (a) = \sum_a P_X (a) L_k (a)$. We then have
\begin{align*}
&E_{\mu_{i,\eps}} \log W_{j,\eps} >  E_{\mu_{k,\eps}} \log W_{l,\eps} \\
& \vnl \langle L_i,L_j  \rangle - \frac{1}{2} \| L_j \|^2 > \langle L_k,L_l  \rangle - \frac{1}{2} \| L_l \|^2.
\end{align*}
\end{lemma}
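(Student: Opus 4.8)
The plan is to expand both sides of the claimed inequality to second order in $\eps$ and compare coefficients, exactly as in the proof of Proposition \ref{vnmismprop}. First I would write out the left-hand expectation explicitly, using $\mu_{i,\eps}(a,b) = P_X(a)\Pn(b)(1+\eps L_i(a,b))$ together with $\log W_{j,\eps}(b|a) = \log \Pn(b) + \log(1+\eps L_j(a,b))$, and then apply the Taylor expansion $\log(1+\eps L_j) = \eps L_j - \tfrac{\eps^2}{2} L_j^2 + o(\eps^2)$. Collecting by powers of $\eps$ gives three contributions: a zeroth-order constant $\sum_b \Pn(b)\log\Pn(b)$ independent of all indices; a first-order term $\eps \sum_b \Pn(b)\bar{L}_i(b)\log\Pn(b)$, where the piece linear in $L_j$ vanishes because of the normalization constraint $\sum_b \Pn(b) L_j(a,b)=0$; and a second-order term $\eps^2\big(\langle L_i,L_j\rangle - \tfrac{1}{2}\|L_j\|^2\big)$.

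The identical expansion applies to $E_{\mu_{k,\eps}}\log W_{l,\eps}$ with the indices $(i,j)$ replaced by $(k,l)$. Subtracting the two, the zeroth-order constants cancel exactly, and the surviving first-order difference is $\eps\sum_b \Pn(b)\big(\bar{L}_i(b)-\bar{L}_k(b)\big)\log\Pn(b)$. Here is where the hypothesis enters: the assumption $\sum_a P_X(a)L_i(a)=\sum_a P_X(a)L_k(a)$ is precisely the statement $\bar{L}_i=\bar{L}_k$ (read for each output letter $b$), so this first-order difference vanishes identically. This is the crucial step: without it the order-$\eps$ term would dominate and no second-order comparison could be read off in the limit.

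After this cancellation, setting $T(\eps) := E_{\mu_{i,\eps}}\log W_{j,\eps} - E_{\mu_{k,\eps}}\log W_{l,\eps}$, I would conclude
\begin{equation*}
T(\eps) = \eps^2\Big[\big(\langle L_i,L_j\rangle - \tfrac{1}{2}\|L_j\|^2\big) - \big(\langle L_k,L_l\rangle - \tfrac{1}{2}\|L_l\|^2\big)\Big] + o(\eps^2).
\end{equation*}
Dividing by $\eps^2$ and letting $\eps \searrow 0$, the sign of $T(\eps)$ for all sufficiently small $\eps$ agrees with the sign of the bracketed quantity (whenever the latter is nonzero), which is exactly the $\vnl$-equivalence asserted by the lemma.

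The only real difficulty is the order-by-order bookkeeping: one must verify that the normalization constraint kills the term linear in the metric direction $L_j$, and that the surviving order-$\eps$ contribution depends on the input direction only through its output average $\bar{L}_i$. Once these two facts are in place, the hypothesis $\bar{L}_i=\bar{L}_k$ is seen to be exactly the condition that reduces the comparison to the second-order (very noisy) terms, and the rest is the same second-order Taylor expansion already carried out above.
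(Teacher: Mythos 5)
Your proposal is correct and takes essentially the same approach as the paper's proof: a second-order Taylor expansion of $\log(1+\eps L_j)$, cancellation of the index-independent zeroth-order term, vanishing of the first-order term linear in $L_j$ by the direction constraint $\sum_b \Pn(b) L_j(a,b)=0$, and cancellation of the remaining first-order terms via the hypothesis, leaving only the $\eps^2$ terms. Your explicit reading of the hypothesis as $\bar{L}_i=\bar{L}_k$ per output letter is precisely the paper's step $\sum P_X \Pn L_i \log \Pn = \sum P_X \Pn L_k \log \Pn$.
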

\begin{proof}
Using a second order Taylor expansion for $\log (1 +\eps L_j)$, we have
\begin{align}
&E_{\mu_{i,\eps}} \log W_{j,\eps} = \sum P_X \Pn (1+\eps L_i) \log (\Pn (1+\eps L_j) ) \notag \\
&= \sum P_X \Pn \log \Pn \notag \\&+ \eps \sum P_X \Pn L_i \log \Pn
+ \eps \sum P_X \Pn L_j \notag \\
&+ \eps^2 \sum P_X \Pn L_i L_j - \eps^2 \frac{1}{2} \sum P_X \Pn L_j^2 \label{survive}
\end{align}
The only term which is zero in previous summation is the third term, namely $ \sum P_X \Pn L_j=0$, which is a consequence of the fact that $L_j$ is a direction (i.e. $ \sum \Pn L_j =0$). Now, when we look at the inequality $E_{\mu_{i,\eps}} \log W_{j,\eps} >  E_{\mu_{k,\eps}} \log W_{l,\eps}$, we can surely simplify the term $\sum P_X \Pn \log \Pn$, since it appears both on the left and right hand side. Moreover, using the assumption that $\sum_a P_X (a) L_i (a) = \sum_a P_X (a) L_k (a)$, we have $\sum P_X \Pn L_i \log \Pn = \sum P_X \Pn L_k \log \Pn $. Hence the only terms that survive in \eqref{survive}, when computing $E_{\mu_{i,\eps}} \log W_{j,\eps} >  E_{\mu_{k,\eps}} \log W_{l,\eps}$, are the terms in $\eps^2$, which proves the lemma.
\end{proof}

\subsection{One-sided Sets}\label{onesidedsec}
We consider for now the use of linear decoder (i.e., induced by only one metric).
We recall that, as proved in previous section, for
$W_{0,\eps}=\Pn(1+ \eps L_0)$ and $d_{\eps}=\log
W_{1,\eps}$, where $W_{1,\eps}=\Pn(1+ \eps L_1)$, we
have
\begin{eqnarray*}
\lim_{\eps \rightarrow 0} \frac{2}{\eps^2}
R(P_X, W_{0,\eps}, d_{\eps})
=
 \begin{cases}
\frac{\langle
\widetilde{L}_0,\widetilde{L}_1\rangle^2}{\|\widetilde{L}_1 \|^2}, 
&   \text{when} \,  \langle
\widetilde{L}_0,\widetilde{L}_1\rangle \geq 0 \\ 
0, & \text{otherwise}.
\end{cases} 
%
\end{eqnarray*}
This picture of the mismatched mutual information directly suggests
a first result.
Assume $\Sc$, hence $\wtil{\Sc}$, to be convex. 
By using the worse channel to be the only decoding metric, it is
then clear that the VN compound capacity can be achieved. In fact,
no matter what the true channel $\widetilde{L}_0 \in \wtil{\Sc}$ is,
the mismatched mutual information given by the projections' squared
norm of $\widetilde{L}_0$ onto $\widetilde{L}_ {\Sc}$ cannot be shorter than $\|\widetilde{L}_ {\Sc}\|^2$, which is the very noisy compound capacity of $\Sc$ (cf. Figure \ref{onesidedB}). 
This agrees with a result proved in \cite{cisnar}.

However, with this picture we understand that the notion of
convexity is not necessary. As long as the compound set is such that
its projection in the direction of the minimal vector stays on one
side, i.e., if the compound set is entirely contained in the half
space delimited by the normal plan to the minimal vector, i.e., if for any $L_0 \in \Sc$, we have $\langle \widetilde{L}_0,\widetilde{L}_ {\Sc}\rangle \geq 0$ and :
$$  \frac{\langle \widetilde{L}_0,\widetilde{L}_ {\Sc}\rangle^2}{\|\widetilde{L}_ {\Sc}\|^2} \geq \|\widetilde{L}_ {\Sc}\|^2,$$
we will achieve compound capacity by using the linear decoder induced by the
worst channel metric (cf. figure \ref{onesidedB} where $S$ is not
convex but still verifies the above conditions). We call such sets
one-sided sets, as defined in the following.
\begin{definition}\label{thedef} {\bf VN One-sided Set}\\
A VN compound set $\Sc$ is one-sided iff for any $L_0 \in
\Sc$, we have
\begin{align}
& \langle \widetilde{L}_0,\widetilde{L}_ {\Sc}\rangle \geq 0,\label{defonesided2} \\ 
& \frac{\langle \widetilde{L}_0,\widetilde{L}_ {\Sc}\rangle^2}{\|\widetilde{L}_ {\Sc}\|^2} \geq \|\widetilde{L}_ {\Sc}\|^2. \label{defonesided22}
\end{align}
Equivalently, a VN compound set $\Sc$ is one-sided iff for any $L_0 \in
\Sc$, we have 
\begin{eqnarray}
\| \wtil{L}_0 \|^2 - \| \wtil{L}_S \|^2 - \| \wtil{L}_0 - \wtil{L}_S \|^2 \geq 0.  \label{defonesided}
\end{eqnarray}
\end{definition}
\begin{figure}
\begin{center}
\includegraphics[scale=.64]{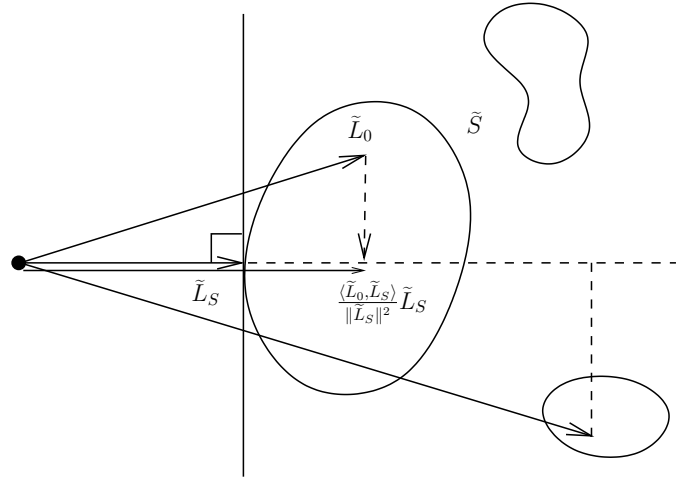} 
\caption{Very noisy one-sided compound set: in this figure, $\wtil{S}$ is the union of three sets.  The linear decoder induced by the worst channel metric $\log L_\Sc$ when the true channel is $L_0$ affords reliable communication for rates as large as the squared norm of the projection of $\wtil{L}_0$ onto $\wtil{L}_\Sc$. From the one-sided shape of the compound set, this projections' squared norm is always as large as the compound capacity given by the squared norm of $\wtil{L}_\Sc$.
}
\label{onesidedB}
\end{center}
\end{figure}
\vspace{.3cm}
\begin{proposition}
In the VN setting, the linear decoder induced by the worst channel metric $\log L_\Sc$ is capacity achieving for one-sided sets.
\end{proposition}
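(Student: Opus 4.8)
The plan is to observe that this proposition is an almost immediate consequence of Proposition \ref{vnmismprop} together with the two defining inequalities of a one-sided set in Definition \ref{thedef}. The decoder in question is the linear decoder induced by the metric $d_\eps = \log W_{\Sc,\eps}$, where $W_{\Sc,\eps} = \Pn(1+\eps L_\Sc)$ is the worst channel, so in the notation of Proposition \ref{vnmismprop} we simply identify the mismatched direction as $L_1 = L_\Sc$.

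First I would fix an arbitrary true channel $W_{0,\eps} = \Pn(1+\eps L_0)$ with $L_0 \in \Sc$, and invoke Proposition \ref{vnmismprop} with this choice. It tells us that, in the VN limit, the decoder induced by $\log W_{\Sc,\eps}$ achieves rate $\langle\wtil{L}_0,\wtil{L}_\Sc\rangle^2/\|\wtil{L}_\Sc\|^2$ when $\langle\wtil{L}_0,\wtil{L}_\Sc\rangle\geq 0$, and rate $0$ otherwise.

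Next I would apply the two inequalities defining a one-sided set. The first, \eqref{defonesided2}, guarantees $\langle\wtil{L}_0,\wtil{L}_\Sc\rangle\geq 0$, so we land in the nonzero branch of Proposition \ref{vnmismprop}; the second, \eqref{defonesided22}, states precisely that the achievable rate $\langle\wtil{L}_0,\wtil{L}_\Sc\rangle^2/\|\wtil{L}_\Sc\|^2$ is at least $\|\wtil{L}_\Sc\|^2$. Since $\|\wtil{L}_\Sc\|^2$ is by definition the VN compound capacity of $\Sc$, and since $L_0$ was an arbitrary element of $\Sc$, the single fixed decoder induced by $\log W_{\Sc,\eps}$ achieves at least the compound capacity against every possible true channel, which is exactly the assertion of being capacity achieving.

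There is essentially no hard computational step: the content has been front-loaded into the geometric projection formula of Proposition \ref{vnmismprop} and into Definition \ref{thedef}, which was tailored to make exactly this conclusion hold. The only point deserving a line of care is the quantifier structure. Because the decoder metric $\log W_{\Sc,\eps}$ is chosen once and for all, one must verify that the two inequalities hold \emph{uniformly} over all $L_0\in\Sc$ rather than tuning the decoder to each channel; this uniformity is precisely what the ``for any $L_0 \in \Sc$'' clause in Definition \ref{thedef} supplies, so that the same decoder defeats the entire compound set.
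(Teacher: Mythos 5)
Your proof is correct and follows essentially the same route as the paper, which presents this proposition without a separate proof precisely because it is an immediate consequence of Proposition \ref{vnmismprop} combined with the two inequalities \eqref{defonesided2} and \eqref{defonesided22} of Definition \ref{thedef} --- the definition of one-sidedness was engineered so that the projection rate formula dominates $\|\wtil{L}_\Sc\|^2$ uniformly over the set. Your closing remark on the quantifier structure (one fixed decoder versus all $L_0 \in \Sc$) is exactly the right point of care and matches the paper's intent.
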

\vspace{.2cm}
The very noisy picture also suggests that the one-sided property is indeed necessary in order to be able to achieve the compound capacity with a single linear decoder. However, our main goal here is not motivated by results of this kind and we will not discuss this in more details. We now investigate whether we can still achieve compound capacity on non one-sided compound sets, by using generalized linear decoders.

\subsection{Finite Sets}
Let us consider a simple case of non one-sided set, namely when $S$ contains only two channels that are not satisfying the one-sided property in \eqref{defonesided}. We denote the set by
$$S=\{W_0,W_1\}.$$
and it contains the true channel $W_0$ and an arbitrary other channel $W_1$.
A first idea is to use a generalized decoder induced by the two metrics $d_1 = \log W_0$ and $d_2 = \log W_1$, i.e. decoding with the GLRT test using both channels, which defines the following decoding map
$$ \arg\max_{x_m} W_0^n(y|x_m) \vee W_1^n(y|x_m) .$$
The maximization of $W_0^n(y|x_m)$ corresponds to the maximization of an optimal ML decoder with the true channel, whereas the maximization of $W_1^n(y|x_m)$ corresponds to the maximization of ML decoder with a mismatched metric, which may have nothing to do with
the true channel metric. So we need to estimate how probable it is that a codeword which has not been sent appears highly plausible under the mismatched metric (i.e., an error event).
Using, \eqref{projs}, we can achieve the following rate with such a decoder:
\begin{align}
& R_0 \wedge R_1, \\
&\text{where } R_k =  \min_{\mu \in \A_k} D(\mu \| \mu_0^p), \,\,\, k=0,1
\end{align}
and
\begin{eqnarray*}
 \A_k &=& \{ \mu: \mu_X = P_X, \mu_Y = (\mu_0)_Y, \\
 && \qquad  E_\mu \log W_k > \vee_{j=0}^1 E_{\mu_0} \log W_j \}, \quad \quad \forall k=0,1.
\end{eqnarray*}
Note that $ \vee_{j=0}^1 E_{\mu_0} \log W_j = E_{\mu_0} \log W_0$, hence the expression of $\A_k$ simplifies to
\begin{eqnarray*}
 \A_k &=& \{ \mu: \mu_X = P_X, \mu_Y = (\mu_0)_Y, \\
 && \qquad  E_\mu \log W_k >  E_{\mu_0} \log W_0 \}, \quad \quad \forall k=0,1.
\end{eqnarray*}
Moreover, the compound capacity of $S$ is given here by $$C(S)=I(P_X,W_0) \wedge I(P_X,W_1).$$
We know that $R_0$ is the mutual information of $W_0$, i.e. $R_0 = I(P_X, W_0)$ (since it is the rate achieved with a ML decoder with a metric matched to the channel, as explained previously). So the generalized decoder that we are considering achieves compound capacity if $R_1 \geq C(S)$.
We check this here in the very noisy setting.
We use the notations and conventions defined previously for the VN setting,
and to compute the VN limit of $R_{1,\eps}$, we need the VN limits of $D(\mu_\eps \| \mu_{0,\eps}^p)$ and $\A_{1,\eps}$. We have
\begin{eqnarray*}
 D(\mu_\eps \| \mu_{0,\eps}^p) \vnl \|L -\bar{L}_0\|^2.  
\end{eqnarray*}
Moreover $ (\mu_\eps)_X = P_X$ for any $\eps$, since we assume that $L$ satisfies $\sum_b L(a,b) \Pn (b) = 0$ and
\begin{align*}
(\mu_\eps)_Y = (\mu_{0,\eps})_Y \vnl \bar{L}=\bar{L}_0.
\end{align*}
Finally, using lemma \ref{theid}, we have
\begin{align*}
&E_{\mu_\eps} \log W_{1,\eps} >  E_{\mu_{0,\eps}} \log W_{0,\eps} \\
& \vnl \langle L,L_1  \rangle - \frac{1}{2} \| L_1 \|^2 > \frac{1}{2} \| L_0 \|^2.
\end{align*}
Hence
\begin{align}
& \A_{1,\eps} \vnl
\{ L: \, \bar{L}=\bar{L}_0 , \langle L,L_1 \rangle  >  \frac{1}{2}(\| L_0 \|^2+\| L_1 \|^2) \} \notag \\
&= \{ \wtil{L}: \,  \langle \wtil{L},\wtil{L}_1 \rangle  >  \frac{1}{2}(\| L_0 \|^2+\| L_1 \|^2)- \langle \bar{L}_0,\bar{L}_1  \rangle \}. \label{puttil}
\end{align}
Note that we used $ \bar{L}=\bar{L}_0 $ to get \eqref{puttil} from its previous line.
Putting pieces together we get
\begin{align*}
& R_{1,\eps} \vnl \min_{L: \, \bar{L}=\bar{L}_0 , \langle \wtil{L},\wtil{L}_1 \rangle  >   \frac{1}{2}(\| L_0 \|^2+\| L_1 \|^2)- \langle \bar{L}_0,\bar{L}_1  \rangle}  \|L -\bar{L}_0\|^2 \\
& \qquad \quad \,\, =  \min_{L: \, \langle \wtil{L},\wtil{L}_1 \rangle  >   \frac{1}{2}(\| L_0 \|^2+\| L_1 \|^2)- \langle \bar{L}_0,\bar{L}_1  \rangle}  \| \wtil{L}\|^2 .
\end{align*}
We now are able to resolve the above minimization, and we get
\begin{align*}
 R_{1,\eps} \vnl    \frac{\left[\frac{1}{2}(\| L_0 \|^2+\| L_1 \|^2) - \langle \bar{L}_0,\bar{L}_1  \rangle \right]^2}{\| \wtil{L}_1\|^2}.
\end{align*}
Also,
$$ C(S_\eps) \vnl \|\widetilde{L}_0\|^2 \wedge \|\widetilde{L}_1\|^2.$$
Therefore, the inequality which allows us to verify locally if the proposed decoding rule achieves compound capacity, i.e. if $R_1 \geq C(S)$ in the VN setting, is given by
\begin{align}
 & R_{1,\eps} \geq C(S_\eps) \vnl \notag \\
 &  \frac{\left[\frac{1}{2}(\| L_0 \|^2+\| L_1 \|^2) - \langle \bar{L}_0,\bar{L}_1  \rangle \right]^2}{\| \wtil{L}_1\|^2} \geq \|\widetilde{L}_0\|^2 \wedge \|\widetilde{L}_1\|^2 . \label{tocheck}
\end{align}
But
\begin{align*}
 &\frac{1}{2}(\|L_0\|^2+\|L_1\|^2) - \langle \bar{L}_0,\bar{L}_1 \rangle \\
 &= \frac{1}{2}(\|\widetilde{L}_0\|^2+\|\widetilde{L}_1\|^2+\|\bar{L}_0-\bar{L}_1\|^2),
\end{align*}
hence, \eqref{tocheck} is equivalent to
$$\frac{1}{2}(\|\widetilde{L}_0\|^2+\|\widetilde{L}_1\|^2+\|\bar{L}_0-\bar{L}_1\|^2) \geq \|\widetilde{L}_0\|\|\widetilde{L}_1\| \wedge \|\widetilde{L}_1\|^2,$$
which clearly holds no matter what $L_0$ and $L_1$ are. \\
This can be directly generalized to any finite sets and we have the following result.
\begin{proposition}
In the VN setting, GLRT with all channels in the set is capacity achieving for finite compound sets, and generalized linear.
\end{proposition}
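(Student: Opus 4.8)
The plan is to reduce the finite-set case directly to the two-channel computation just carried out, exploiting the fact that the GLRT error analysis decouples into pairwise comparisons. Write $S_\eps = \{W_{0,\eps}, W_{1,\eps}, \ldots, W_{M,\eps}\}$ with $W_{j,\eps} = \Pn(1+\eps L_j)$, let $W_{0,\eps}$ be the true channel, and take the GLRT decoder induced by all metrics $d_j = \log W_{j,\eps}$, $j=0,\ldots,M$. Since $K = |S| = M+1$ is finite and independent of $n$, this is by definition a generalized linear decoder, which settles the ``generalized linear'' assertion at once. Its achievable rate is, by \eqref{projs}, $\bigwedge_{k=0}^M R_k$ with $R_k = \min_{\mu \in \A_k} D(\mu\|\mu_0^p)$ and $\A_k = \{\mu : \mu_X = P_X,\ \mu_Y = (\mu_0)_Y,\ E_\mu\log W_{k,\eps} > \vee_{j} E_{\mu_0}\log W_{j,\eps}\}$.

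The first key step is to identify the common threshold. I would show $\vee_{j} E_{\mu_0}\log W_{j,\eps} = E_{\mu_0}\log W_{0,\eps}$, because $E_{\mu_0}[\log W_{0,\eps} - \log W_{j,\eps}] = \sum_a P_X(a)\, D(W_{0,\eps}(\cdot|a)\|W_{j,\eps}(\cdot|a)) \geq 0$; equivalently, in the VN limit Lemma \ref{theid} gives $E_{\mu_{0,\eps}}\log W_{0,\eps} - E_{\mu_{0,\eps}}\log W_{j,\eps} \vnl \frac{1}{2}\|L_0 - L_j\|^2 \geq 0$. The decisive consequence is that each constraint set $\A_k$ depends only on the pair $(W_{0,\eps}, W_{k,\eps})$: it is exactly the set appearing in the two-channel problem with $W_1$ replaced by $W_k$. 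Hence $R_0 = I(P_X, W_{0,\eps}) \vnl \|\wtil{L}_0\|^2$ (the matched case), and for $k \geq 1$ the very same VN computation that produced \eqref{tocheck} yields
$$R_{k,\eps} \vnl \frac{\left[\tfrac{1}{2}(\|L_0\|^2 + \|L_k\|^2) - \langle\bar{L}_0, \bar{L}_k\rangle\right]^2}{\|\wtil{L}_k\|^2} = \frac{\left[\tfrac{1}{2}(\|\wtil{L}_0\|^2 + \|\wtil{L}_k\|^2 + \|\bar{L}_0 - \bar{L}_k\|^2)\right]^2}{\|\wtil{L}_k\|^2},$$
and the same arithmetic-geometric-mean manipulation shows $R_{k,\eps} \geq \|\wtil{L}_0\|^2 \wedge \|\wtil{L}_k\|^2$.

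Finally I would close by comparing against $C(S_\eps) \vnl \bigwedge_{j=0}^M \|\wtil{L}_j\|^2$. Since $C(S_\eps) \leq \|\wtil{L}_0\|^2 \wedge \|\wtil{L}_k\|^2 \leq R_{k,\eps}$ for every $k \geq 1$, and $R_0 = \|\wtil{L}_0\|^2 \geq C(S_\eps)$, we obtain $\bigwedge_k R_{k,\eps} \geq C(S_\eps)$, i.e. the decoder is capacity achieving in the VN setting. The point worth stressing — and where one should be careful rather than where real difficulty lies — is that enlarging the set from two to finitely many channels leaves every individual $R_k$ unchanged (each remains a pairwise comparison between $W_0$ and $W_k$) while only lowering the compound capacity to a smaller minimum; thus the required inequality is, if anything, weaker than \eqref{tocheck}. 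The only genuine obstacle is verifying that the GLRT threshold collapses to $E_{\mu_0}\log W_{0,\eps}$, since this is precisely what guarantees the pairwise decoupling; once that is in hand, the result is the two-channel bound applied $M$ times.
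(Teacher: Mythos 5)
Your proof is correct and follows essentially the same route as the paper: identify that the GLRT threshold collapses to $E_{\mu_0}\log W_{0,\eps}$ (so each $R_k$ reduces to a pairwise comparison between $W_0$ and $W_k$), then apply the two-channel computation leading to \eqref{tocheck} and the arithmetic-geometric-mean bound. The paper carries out exactly this calculation for two channels and asserts the extension to finite sets; your pairwise-decoupling argument is precisely that ``direct generalization,'' just written out explicitly.
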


\subsection{Finite Union of One-sided Sets}\label{uniononesidedsec}
\subsubsection{Using ML Metrics}\label{ml}
In the previous sections, we have found linear, or generalized
linear, decoders that are capacity achieving for one-sided sets and
for finite sets. Next we consider compound sets that are finite
unions of one-sided sets and hope to combine our results in these
two cases. Assume
$$S=S_1 \cup S_2 , $$
where $S_1$ and $S_2$ are one-sided: in this section we consider
only the VN setting, hence saying that $S_1$ is one sided really
means that the VN compound set $\Sc_1$ corresponding to $S_{1,\eps}$
is one-sided according to Definition \ref{thedef}.

For a fixed input distribution $P_X$, let $W_1=W_{S_1}$ and
$W_2=W_{S_2}$ be the worst channel of $S_1, S_2$, respectively.(cf.
figure
\ref{2oc}).
\begin{figure}
\begin{center}
\includegraphics[scale=.95]{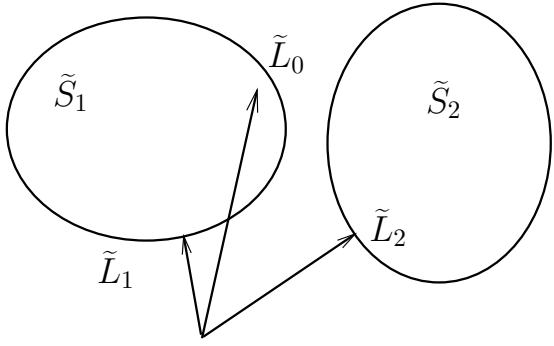}
\caption{A VN Compound set which is the union of two one-sided components, $\wtil{S}_1$ and $\wtil{S}_2$, drawn in the space of centered directions (tilde vectors)}
\label{2oc}
\end{center}
\end{figure}
A plausible candidate for a generalized linear universal decoder the
GLRT with metrics $d_1=\log W_1$ and $d_2=\log W_2$, hoping that a
combination of earlier results for finite and one-sided sets would
make this decoder capacity achieving. Say w.l.o.g. that $W_0 \in
S_1$. Using
\eqref{projs}, the following rate can be achieved with the proposed
decoding rule:
$$R(P_X, W_0, \{d_k\}_{k=1}^K)=R_1 \wedge R_2$$
where
$$R_k = \inf_{\A_k} D(\mu\|\mu_0^p) , \quad k=1,2$$
and for $k=1,2$,
\begin{equation}
\label{eqn:Ak}
\A_k =\{\mu : \,  \mu^p=\mu_0^p, E_{\mu} \log W_k
\geq   \vee_{l=1}^2E_{\mu_0} \log W_l \},
\end{equation}
Note that we are using similar notations for this section as for the previous
one, although the sets $\A_k$ and rates $R_k$ are now given by
different expressions. We also use $\mu^p=\mu_0^p$ to express in a
more compact way that the marginals of $\mu$ and $\mu_0$ are the
 same.

Since $W_1$ and $W_2$ are the worst channel for $P_X$ in each component, the compound capacity over $S=S_1 \cup S_2$ is
$$C(S) = I(P_X,W_1) \wedge I(P_X,W_2).$$
In the finite compound set case of previous section, we further
simplified the expression of the $\A_k$'s, since we the maximum in
$\vee_{l=1}^2 E_{\mu_0} \log W_l $ could be identified. This is no
longer the case here, and we have to consider both cases, i.e.:
\begin{eqnarray}
\label{eqn:case1}
&\text{Case 1:} \,\,\, E_{\mu_0} \log W_1 \geq E_{\mu_0} \log W_2\\
\label{eqn:case2}
&\text{Case 2:} \,\,\, E_{\mu_0} \log W_1 \leq E_{\mu_0} \log W_2.
\end{eqnarray}
In order to verify that the decoder is capacity achieving, we need
to check if both $R_1$ and $R_2$  are greater
than or equal to the compound capacity $C(S)$, no matter which of case 1 or case 2 occurs. Thus, there are
totally $4$ inequalities to check. While checking these cases is
somewhat tedious, we will, in the following, go through each of them
carefully and point out a specific case that is problematic, before
giving a counterexample where GLRT with the worst channels is in fact {\it
not} capacity achieving. Later when we propose a capacity achieving
decoder, we will go through a similar procedure in a more concise
way.

Note that under case 1,
$$\text{For case 1:}\,\,\, \A_1 =\{ \mu : \,  \mu^p=\mu_0^p, E_{\mu} \log W_1 \geq  E_{\mu_0} \log W_1 \},$$
which has the form of the constraint set for $R(P_X,W_0,d_1)$ expressed in \eqref{misset}.
Hence we have
\begin{align}
&\text{For case 1:} \,\,\,  R_1 = R(P_X,W_0,\log W_1).
\end{align}
As shown in section \ref{onesidedsec}, $R(P_X,W_0,\log W_1)$ becomes in the VN limit:
\begin{align}
R(P_X,W_{0,\eps},\log W_{1,\eps}) \vnl \frac{\langle \widetilde{L}_0,\widetilde{L}_ {1}\rangle^2}{\|\widetilde{L}_ {1}\|^2}
\end{align}
(note that since $\Sc_1$ is one-sided, $\langle \widetilde{L}_0,\widetilde{L}_ {1}\rangle \geq 0$).
Also, in the VN limit, $C(S_\eps)$ becomes $\| \wtil{L}_1 \|^2 \wedge \| \wtil{L}_2 \|^2$, hence 
\begin{align}
\text{For case 1:} \,\,\, R_{1,\eps} \stackrel{?}{\geq} C(S_\eps) \vnl \frac{\langle \widetilde{L}_0,\widetilde{L}_ {1}\rangle^2}{\|\widetilde{L}_ {1}\|^2 } \stackrel{?}{\geq} \| \wtil{L}_1 \|^2 \wedge \| \wtil{L}_2 \|^2. \label{r1}
\end{align}
But we assumed that $\Sc_1$ is one-sided and that $L_1$ is the worst
direction of $\Sc_1$. Moreover, we assumed that $W_0 \in S_1$, i.e.
$L_0 \in \Sc_1$. Hence, \eqref{r1} holds by definition of one-sided
sets, cf. def. \ref{defonesided} (with this definition, \eqref{r1} holds with $\| \wtil{L}_1 \|^2$ on the right hand side, hence it holds for $\| \wtil{L}_1 \|^2 \wedge \| \wtil{L}_2 \|^2$).

For case 2, i.e. when $E_{\mu_0} \log W_1 \leq E_{\mu_0} \log W_2$, we
have $R_1 = \inf_{\mu\in\A_1} D(\mu\|\mu_0^p)$, where this time $\A_1$ is given by
\begin{eqnarray}
\text{For case 2:}\,\,\, \A_1 = \{ \mu:
\mu^p=\mu_0^p, E_\mu \log W_1 \geq E_{\mu_0} \log
W_2 \}
\end{eqnarray}
Note that, by definition of case 2, the constraint set $\A_1$ is smaller than the constraint set $\mathcal{B}$ given below: 
\begin{align}
&\A_1 = \{ \mu:
\mu^p=\mu_0^p, E_\mu \log W_1 \geq E_{\mu_0} \log
W_2 \}\nonumber\\
&\subset \mathcal{B}=\{ \mu: \mu^p=\mu_0^p, E_\mu \log W_1 \geq E_{\mu_0} \log
W_1 \}
\label{eqn:R1case2}
\end{align}
hence, 
$$\inf_{\mu \in \A_1} D(\mu \| \mu_0^p) \geq \inf_{\mu \in \mathcal{B}} D(\mu \| \mu_0^p).$$
But $\mathcal{B}$ is the constraint set appearing in $R(P_X, W_0, \log W_1)$, which means that
$$\inf_{\mu \in \mathcal{B}} D(\mu \| \mu_0^p) = R(P_X, W_0, \log W_1),$$
therefore, under case 2, we showed that $R_1 \geq R(P_X, W_0, \log W_1)$. Now, as shown before,  $R(P_X, W_0, \log W_1)$ is locally lower bounded by $I(P_X, W_1)\geq C(S)$, by the one-sided assumption on $\Sc_1$. \\

Hence, we have just shown that $R_1 \geq C(S)$, both under case 1 and 2. \\

Next, we check whether $R_2 = \inf_{\mu\in\A_2} D(\mu\|\mu_0^p) \geq
C(S)$ holds or not.  We have again to check this for case 1 and 2. This time we start with case 2.
Note that the expression of $R_2$ in case 2 is perfectly symmetric to the expression of $R_1$ in case 1, we just have to swap the indices 1 and 2, hence
\begin{align*}
&\text{For case 2:} \,\,\,  R_2 = R(P_X,W_0,\log W_2).
\end{align*}
and the inequality we need to check in the very noisy case is
\begin{align}
\text{For case 2:} \,\,\, R_{2,\eps} \stackrel{?}{\geq} C(S_\eps) \vnl \frac{\langle
\widetilde{L}_0,\widetilde{L}_2\rangle^2}{\|\widetilde{L}_2\|^2} \stackrel{?}{\geq}
\|\widetilde{L}_1\|^2 \wedge \|\widetilde{L}_2\|^2. \label{bug}
\end{align}
However, the one-sided property does not apply anymore, since we
assumed that $L_0 $ belongs to $\Sc_1$ and not $\Sc_2$. Indeed, if we have no restriction on the positions of $\wtil{L}_0$ and $\wtil{L}_2$, \eqref{bug} can be zero.
Comparing this with the case of a single one-sided set, we see this is exactly the difficulty of analyzing generalized linear decoders.
Using multiple metrics, especially $d_2=\log W_2$, which does not
have any one-sided relation with the actual channel $W_0$, causes an
extra chance of making errors: an incorrect codeword can appear very
plausible according to metric $d_2$. The probability for this to
happen is captured by the rate $R_2$. On the other hand, there is
also a lower target: \eqref{bug} should not hold for any possible 
$\wtil{L}_0$, $\wtil{L}_1$ and $\wtil{L}_2$, \eqref{bug} should hold when these centered directions are satisfying case 2. Moreover, the compound capacity is now the minimum between the mutual informations $\|\widetilde{L}_1\|^2$ and $\|\widetilde{L}_2\|^2$. 
One might hope that the combination
of all these effects leads to $R_2> C(S)$ and hence a capacity
achieving decoder design. Unfortunately, this is not the case.

\vspace{0.5cm}

\begin{proposition}
\label{prop:badglrt}
In the VN setting and for compound sets having a finite number of
one-sided components, GLRT with the worst channel of each component
is not capacity achieving.
\end{proposition}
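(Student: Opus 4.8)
The statement is negative, so the plan is to produce an explicit \emph{counterexample}: a compound set $S = \Sc_1 \cup \Sc_2$ that is a union of two one-sided components, on which the GLRT induced by the two worst-channel metrics $\log W_1, \log W_2$ fails to reach $C(S)$ in the VN limit. The preceding analysis already establishes $R_1 \geq C(S)$ no matter what, so every degree of freedom must be spent forcing $R_2 < C(S)$. I would place the true channel in the first component, $L_0 \in \Sc_1$, and arrange the geometry to fall into case~2 of \eqref{eqn:case2}, where $R_2 = R(P_X, W_0, \log W_2)$ and its VN limit is exactly the left-hand side of \eqref{bug}, namely $\langle \wtil{L}_0, \wtil{L}_2\rangle^2 / \|\wtil{L}_2\|^2$. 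Driving this quantity strictly below $C(S) \vnl \|\wtil{L}_1\|^2 \wedge \|\wtil{L}_2\|^2$ then disproves capacity achievability.

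For the construction I would work directly in the inner-product space of centered directions. Take $\wtil{L}_1, \wtil{L}_2$ orthonormal (each of squared norm $1$), let $\Sc_2 = \{L_2\}$ be a singleton (trivially one-sided, with worst direction $\wtil{L}_2$), and set $\wtil{L}_0 = a\,\wtil{L}_1$ with $a \geq 1$. Then $\Sc_1 = \{L_0, L_1\}$ has worst direction $\wtil{L}_1$ and is one-sided, since the only nontrivial instance of \eqref{defonesided} reads $a^2 - 1 - (a-1)^2 = 2(a-1) \geq 0$. Crucially $\langle \wtil{L}_0, \wtil{L}_2\rangle = 0$, so the left-hand side of \eqref{bug} vanishes while its right-hand side is $\|\wtil{L}_1\|^2 \wedge \|\wtil{L}_2\|^2 = 1$; hence \eqref{bug} fails, $R_2 \vnl 0$, and $C(S) \vnl 1$, so the achievable rate $R_1 \wedge R_2 \vnl 0 < 1 = C(S)$.

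The remaining point to secure is that this configuration genuinely lies in case~2, and here the output-marginal degrees of freedom (the $\bar{L}_j$) are the key. By Lemma \ref{theid}, case~2 is equivalent in the VN limit to $\langle L_0, L_1\rangle - \frac{1}{2}\|L_1\|^2 \leq \langle L_0, L_2\rangle - \frac{1}{2}\|L_2\|^2$, i.e. to $\|L_1 - L_0\|^2 \geq \|L_2 - L_0\|^2$. Splitting each distance by the orthogonality of centered and marginal parts, $\|L_k - L_0\|^2 = \|\wtil{L}_k - \wtil{L}_0\|^2 + \|\bar{L}_k - \bar{L}_0\|^2$, this reduces to $\|\bar{L}_1 - \bar{L}_0\|^2 - \|\bar{L}_2 - \bar{L}_0\|^2 \geq 2a$. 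I would therefore take $\bar{L}_0 = \bar{L}_2$ and choose $\bar{L}_1$ with $\|\bar{L}_1 - \bar{L}_0\|^2 \geq 2a$, a choice that touches only the output marginal of $W_1$ and affects neither the one-sidedness conditions nor \eqref{bug}.

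The main obstacle, and the reason a naive attempt fails, is precisely this tension: one-sidedness of $\Sc_1$ forces $\wtil{L}_0$ to lie far out along $\wtil{L}_1$, hence nearly orthogonal to $\wtil{L}_2$ and lethal to $R_2$, while case~2 naively demands that $W_2$ fit $L_0$ better than $W_1$ does. The resolution is to observe that case~2 is a statement about the \emph{full} directions $L_j$, which carry the marginal parts $\bar{L}_j$, whereas both the one-sidedness conditions and the rate \eqref{bug} depend only on the centered parts $\wtil{L}_j$; one can thus ``pay'' for case~2 entirely in the marginal coordinates. The final step I would verify is realizability: the abstract vectors $\wtil{L}_1, \wtil{L}_2$ (two orthonormal centered directions) and $\bar{L}_1$ (one marginal direction of prescribed norm) are realized by genuine VN directions $L_j(a,b)$ obeying $\sum_b \Pn(b) L_j(a,b) = 0$ once $|\X|$ and $|\Y|$ are large enough for the centered space to have dimension at least $2$ and the marginal space dimension at least $1$, with $\eps$ small enough to keep each $W_{j,\eps}$ a valid transition matrix. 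This is routine dimension counting rather than a genuine difficulty.
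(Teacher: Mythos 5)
Your proposal is correct and takes essentially the same route as the paper: a two-component counterexample in which the case-2 condition is paid for entirely by the marginal parts $\bar{L}_j$, while the centered parts force $R_2 \vnl \langle\wtil{L}_0,\wtil{L}_2\rangle^2/\lVert\wtil{L}_2\rVert^2$ below $\lVert\wtil{L}_1\rVert^2\wedge\lVert\wtil{L}_2\rVert^2$ --- precisely the mechanism the paper isolates, namely that constraints on the uncentered $L_i$ cannot regulate the centered $\wtil{L}_i$. The only difference is cosmetic: the paper exhibits an explicit binary numerical example (where $\wtil{L}_2$ ends up anti-parallel to $\wtil{L}_0$), whereas you build an abstract orthogonal configuration needing a slightly larger alphabet so that the centered space has dimension two; both instantiations are valid.
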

\vspace{.3cm}
{\it Counterexample:} Let $\X=\Y=\{0,1\}$, $P_X=\Pn=\{1/2,1/2\}$,
$$L_0=\begin{pmatrix}
     -2 & 2   \\
      -7 &  7
\end{pmatrix},L_1=\begin{pmatrix}
     2 & -2   \\
      0 &  0
\end{pmatrix} \text{ and }L_2= \begin{pmatrix}
     -1 & 1   \\
      1 &  -1
\end{pmatrix}.$$

The achievable rate can be easily checked with this counterexample,
and in fact there are many other examples that one can construct. We
will, in following, discuss the geometric insights that leads to these
counterexamples (and check that it is indeed a counterexample). This will also be valuable in constructing better decoders in the next section.

We first use Lemma \ref{theid} to write
$$E_{\mu_{0,\eps}} \log W_{1,\eps} \leq E_{\mu_{0,\eps}} \log W_{2,\eps} \stackrel{\text{VN}}{\longrightarrow} \|L_0-L_2\| \leq  \|L_0-L_1\| ,$$
which can be use to rewrite \eqref{eqn:case1} and \eqref{eqn:case2}
in the very noisy setting as
\begin{align}
&\text{Case 1:} \,\,\, \|L_0-L_2\|\geq \|L_0-L_1\| \label{cc1} \\
&\text{Case 2:} \,\,\, \|L_0-L_2\|\leq \|L_0-L_1\| . \label{cc2}
\end{align}
Now to construct a counterexample, we consider the special case
where $\|L_0-L_2\|= \|L_0-L_1\|$ and $\|\widetilde{L}_1\| =
\|\widetilde{L}_2\|$. These assumptions are used to simplify our
discussion, and are not necessary in constructing counterexamples.
One can check that the above example satisfies both assumptions. Now
\eqref{bug} holds if and only if
$$\frac{\langle \widetilde{L}_0,\widetilde{L}_2\rangle}{\|\widetilde{L}_2\|} \stackrel{?}{\geq} \|\widetilde{L}_2\|,$$
which is equivalent to
\begin{eqnarray*}
&\|\widetilde{L}_0\|^2 -\|\widetilde{L}_2\|^2
-\|\widetilde{L}_0-\widetilde{L}_2\|^2 \stackrel{?}{\geq} 0.
\end{eqnarray*}
It is easy to check that the last inequality does not hold for the
given counterexample, which completes the proof of Proposition
\ref{prop:badglrt}. In fact, one can write
\begin{eqnarray}
&\|\widetilde{L}_0\|^2 -\|\widetilde{L}_2\|^2 -\|\widetilde{L}_0-\widetilde{L}_2\|^2 \notag \\ &= \|\widetilde{L}_0\|^2 -\|\widetilde{L}_{1}\|^2 -\|\widetilde{L}_0-\widetilde{L}_{1}\|^2 \notag \\
&+\|\widetilde{L}_0-\widetilde{L}_{1}\|^2-\|\widetilde{L}_0-\widetilde{L}_2\|^2  \label{make} 
\end{eqnarray}
The term on the second line above is always positive (by the
one-sided property), but we have a problem with the term on the last
line: we assumed that $\|L_0-L_2\|= \|L_0-L_1\|$, and this does not
imply that
$\|\widetilde{L}_0-\widetilde{L}_{1}\|^2=\|\widetilde{L}_0-\widetilde{L}_2\|^2$
The problem here is that when using log likelihood functions as
decoding metrics, the constraints in \eqref{eqn:Ak},
\eqref{eqn:case1} and \eqref{eqn:case2} are, in the very noisy case,
given in terms of the perturbation directions $L_i, i=0,1,2$, while
the desired statement about achievable rates and the compound
capacity are given in terms of the centered directions
$\widetilde{L}_i$'s. Thus, counterexamples can be constructed by
carefully assign $\bar{L}_i$'s to be different, hence the
constraints on $L_i$'s cannot effectively regulate the behavior of
$\widetilde{L}_i$'s (\eqref{make} can be made negative). Figure \ref{counter} gives a pictorial illustration of this phenomenon.
\begin{figure}
\begin{center}
\includegraphics[scale=.8]{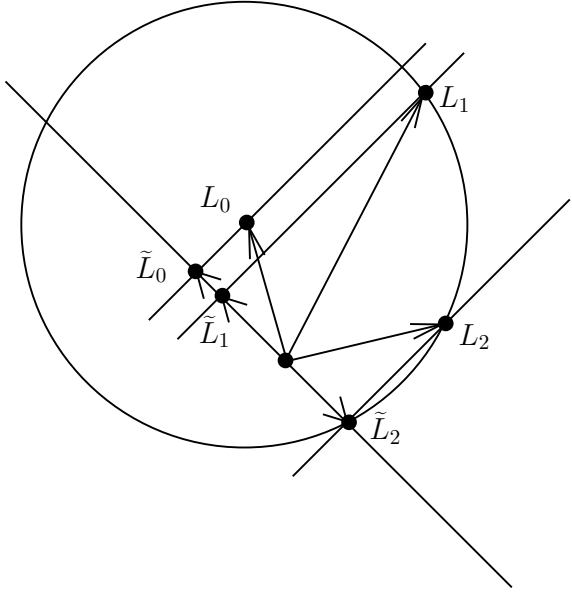}
\caption{This figure illustrates a counterexample, for binary VN channels, to the claim that GLRT with the worst channel metrics is capacity achieving. As illustrated, a  condition on the non centered directions, such as $\| L_0 - L_1\| = \|L_0 -L_2\|$, does not influence the position of the centered directions and can allow $\wtil{L}_2$ and $\wtil{L}_0$ to be opposite, violating the desired inequality in \eqref{bug}.}
\label{counter}
\end{center}
\end{figure}
The above discussion also suggests a fix to the problem. If one
could replace the constraints on $L_i$'s in
\eqref{eqn:Ak},\eqref{eqn:case1} and \eqref{eqn:case2}, by the
corresponding constraints on $\widetilde{L}_i$'s, that might at
least allow better controls over the achievable rates. This is
indeed possible by making a small change of the decoding metrics, as done in the following section.

\subsubsection{Using MAP Metrics}
We now use different metrics than the one used in previous section,
instead of the ML metrics given by $\log W_k$, we use the metrics
\begin{eqnarray}
\log \frac{W_k}{(\mu_k)_{Y}}, \label{apdef}
\end{eqnarray}
which we call the MAP metrics for maximum a posteriori
and which may also be referred as the Fano metrics in the literature. \\

As before, let us consider $W_0$, $W_1$ and $W_2$ such that $W_1$
and $W_2$ are the worst channels of two one-sided components $S_1$
and $S_2$, and $W_0$ belongs to $S_1$. Using \eqref{projs}, with
$d_1=\log \frac{W_1}{(\mu_1)_{Y}}$ and $d_2=\log
\frac{W_2}{(\mu_2)_{Y}}$, the proposed generalized linear decoder
can achieve
$$R(P_X, W_0, \{d_k\}_{k=1}^2)=R_1 \wedge R_2$$
where
$$R_k = \inf_{\A_k} D(\mu\|\mu_0^p) , \quad k=1,2$$
and for $k=1,2$
$$\A_k =\left\{\mu : \,  \mu^p=\mu_0^p, E_{\mu}  \log\frac{W_k}{(\mu_k)_{Y}} \geq   \vee_{l=1}^2E_{\mu_0} \log  \frac{W_l}{(\mu_l)_{Y}} \right\},$$
Note that again, we use same notations for this section as for the
previous one, although the sets $\A_k$ and rates $R_k$ are now given
by different expressions. Since $W_1$ and $W_2$ are the worst
channel for $P_X$ in each component, the compound capacity over
$S=S_1 \cup S_2$ is still given by
$$C(S) = I(P_X,W_1) \wedge I(P_X,W_2).$$
As we we did for \eqref{eqn:case1} and \eqref{eqn:case2}, we
consider separately two cases:
\begin{eqnarray}
\label{c1}
&\text{Case 1:} \,\,\, E_{\mu_0}  \frac{W_1}{(\mu_1)_{Y}} \geq E_{\mu_0}  \frac{W_2}{(\mu_2)_{Y}}  \\
\label{eqn:mapvnc2}
&\text{Case 2:} \,\,\, E_{\mu_0}  \frac{W_1}{(\mu_1)_{Y}} \leq
E_{\mu_0}  \frac{W_2}{(\mu_2)_{Y}}.
\end{eqnarray}

Following the same argument as in the last section, we verify that
$R_1\geq C(S)$ under both cases. Note that in case 1, the constraint in $\A_1$ is
$E_\mu\log\frac{W_1}{(\mu_1)_{Y}} \geq E_{\mu_0} \log
\frac{W_1}{(\mu_1)_{Y}}$. Comparing this with its counterpart for
in ML decoding, the only difference is the extra $E\log(\mu_1)_Y$
terms on both sides. Noticing that $\mu$ and $\mu_0$ have the same
$Y$ marginal distribution, we see that the optimization problem is
exactly the same as before, and thus the achievable rate is the
mismatched rate $R(P_X, W_0, \log W_1)$, which by the one-sided
assumption $W_0\in S_1$ is higher than $I(P_X, W_1)$. In case 2,
$R_1> C(S)$ follows since \eqref{eqn:mapvnc2} gives a more stringent
constraint in $\A_1$, and hence a higher achievable rate (conf.
\eqref{eqn:R1case2}). Hence, just like it was the case for the ML decoding metrics, 
$R_1> C(S)$ is easily checked with the one-sided property. We now show that as opposed to the ML case, with the MAP metrics, we also have $R_2 \geq C(S)$.

The main difference between the proposed MAP decoding metric and the
ML metric used in the previous section can be seen clearly from the
very noisy setting. Using a similar argument as in Lemma
\ref{theid}, we have
\begin{align}
 &E_{\mu_{0,\eps}} \log \frac{W_{1,\eps}}{ (\mu_{1,\eps})_Y} \notag \\
 &\vnl \langle \widetilde{L}_0,\widetilde{L}_1  \rangle - \frac{1}{2} \|\widetilde{L}_1\|^2= \frac{1}{2}(\|\widetilde{L}_0\|^2-\|\widetilde{L}_0-\widetilde{L}_1\|^2).
 \label{useit}
\end{align}
Thus, the optimization in $R_k$ are over the sets
\begin{eqnarray}
\label{eqn:vngamp}
&&A_{k, \epsilon} = \{L: \bar{L}= \bar{L}_0:\\
&&\quad\langle\widetilde{L},
\widetilde{L}_k\rangle -\frac{1}{2} \|\widetilde{L}_k\|^2\geq
\vee_{l=1}^2
 \frac{1}{2} (\|\widetilde{L}_0\|^2-
\|\widetilde{L}_0-\widetilde{L}_l\|^2) \}\nonumber
\end{eqnarray}
and the two cases to be considered are

\begin{eqnarray}
\label{eqn:gmapvncase1}
&\text{Case 1:} \,\,\, \|\widetilde{L}_0-\widetilde{L}_1\|^2 \leq \|\widetilde{L}_0-\widetilde{L}_2\|^2 \\
\label{eqn:gmapvncase2}
&\text{Case 2:} \,\,\,  \|\widetilde{L}_0-\widetilde{L}_1\|^2 \geq
\|\widetilde{L}_0-\widetilde{L}_2\|^2  .
\end{eqnarray}
These expressions are almost the same as the ones for the ML metric,
the very noisy version of \eqref{eqn:Ak}, \eqref{cc1}, and
\eqref{cc2}, except now we have the conditions on the centered
directions (tilde vectors). As discussed in the proof of Proposition
\ref{prop:badglrt}, this change is precisely what is needed to avoid
the counter example. It turns out that this change is also
sufficient for the decoder to be capacity achieving.

Now what remains to be proved is that $R_2 \geq C(S)$. Using
\eqref{useit}, and noticing the marginal constraints, we have for case 1
\begin{eqnarray*}
R_{2,\eps}  \vnl
 \min_{L: \,  \bar{L}=\bar{L}_0, \langle \widetilde{L},\widetilde{L}_2 \rangle  \geq \frac{1}{2}(\|\widetilde{L}_0\|^2+\|\widetilde{L}_2\|^2 - \|\widetilde{L}_0-\widetilde{L}_1\|^2 )} \|\widetilde{L}\|^2
\end{eqnarray*}
and for case 2
\begin{eqnarray*}
R_{2,\eps}  \vnl
\min_{L: \,  \bar{L}=\bar{L}_0, \langle \widetilde{L},\widetilde{L}_2 \rangle  \geq \frac{1}{2}(\|\widetilde{L}_0\|^2+\|\widetilde{L}_2\|^2 - \|\widetilde{L}_0-\widetilde{L}_2\|^2 )} \|\widetilde{L}\|^2 .
\end{eqnarray*}

These optimizations can be explicitly solved as projections:
\begin{align*}
 & \text{For Case 1:}\,\,\, R_{2,\eps} \vnl  \frac{1}{4} \frac{(\|\widetilde{L}_0\|^2+\|\widetilde{L}_2\|^2 - \|\widetilde{L}_0-\widetilde{L}_1\|^2 )^2 }{\| \wtil{L}_2 \|^2} \\
 & \text{For Case 2:}\,\,\, R_{2,\eps}  \vnl  \frac{1}{4} \frac{(\|\widetilde{L}_0\|^2+\|\widetilde{L}_2\|^2 - \|\widetilde{L}_0-\widetilde{L}_2\|^2 )^2 }{\| \wtil{L}_2 \|^2}.
\end{align*}
Recalling that the compound capacity is given by
\begin{align*}
&C(S_\eps) \vnl  \|\widetilde{L}_1\|^2 \wedge \|\widetilde{L}_2\|^2,
\end{align*}
we have
\begin{align}
 & \text{Case 1:}\,\,\, R_{2,\eps} \geq C(S_\eps) \notag \\
 &\vnl  \frac{1}{2} \frac{\|\widetilde{L}_0\|^2+\|\widetilde{L}_2\|^2 - \|\widetilde{L}_0-\widetilde{L}_1\|^2  }{\| \wtil{L}_2 \|} \geq \|\widetilde{L}_1\| \wedge \|\widetilde{L}_2\| \label{f1} \\
 & \text{Case 2:}\,\,\, R_{2,\eps} \geq C(S_\eps) \notag \\
 &\vnl  \frac{1}{2} \frac{\|\widetilde{L}_0\|^2+\|\widetilde{L}_2\|^2 - \|\widetilde{L}_0-\widetilde{L}_2\|^2  }{\| \wtil{L}_2 \|} \geq \|\widetilde{L}_1\| \wedge \|\widetilde{L}_2\| \label{f2}
\end{align}
and we now check that inequalities \eqref{f1} and \eqref{f2} hold with
$\|\widetilde{L}_1\|$ instead of $\|\widetilde{L}_1\| \wedge
\|\widetilde{L}_2\|$ on the right hand side.\\
Starting with \eqref{f2}, we write
\begin{align*}
&\frac{\frac{1}{2} (\|\widetilde{L}_0\|^2+\|\widetilde{L}_2\|^2 - \|\widetilde{L}_0-\widetilde{L}_2\|^2)  }{\|\widetilde{L}_2\|} - \|\widetilde{L}_1\| =\\
& \frac{\frac{1}{2} (\|\widetilde{L}_0\|^2+\|\widetilde{L}_2\|^2  -2 \|\widetilde{L}_1\|\|\widetilde{L}_2\| - \|\widetilde{L}_0-\widetilde{L}_2\|^2)  }{\|\widetilde{L}_2\|}=\\
& \frac{\frac{1}{2} ((\|\widetilde{L}_1\|-\|\widetilde{L}_2\|)^2 + \|\widetilde{L}_0\|^2-\|\widetilde{L}_1\|^2   - \|\widetilde{L}_0-\widetilde{L}_2\|^2)  }{\|\widetilde{L}_2\|} \stackrel{\eqref{eqn:gmapvncase2}}{\geq}\\
& \frac{\frac{1}{2} ((\|\widetilde{L}_1\|-\|\widetilde{L}_2\|)^2 + \|\widetilde{L}_0\|^2-\|\widetilde{L}_1\|^2   - \|\widetilde{L}_0-\widetilde{L}_1\|^2)
}{\|\widetilde{L}_2\|} \geq0
\end{align*}
where last inequality follows from the one-sided property
\begin{eqnarray*}
\frac{\langle \widetilde{L}_0,\widetilde{L}_1\rangle^2}{\|\widetilde{L}_1\|^2} \geq \|\widetilde{L}_ 1\|^2  \Leftrightarrow
\|\widetilde{L}_0\|^2 -\|\widetilde{L}_1\|^2
-\|\widetilde{L}_0-\widetilde{L}_1\|^2  \geq 0
\end{eqnarray*}
For \eqref{f1}, the same expansion gets us directly to
\begin{align*}
  & \frac{\frac{1}{2} (\|\widetilde{L}_0\|^2+\|\widetilde{L}_2\|^2 - \|\widetilde{L}_0-\widetilde{L}_1\|^2 ) }{\| \wtil{L}_2 \|} -\|\widetilde{L}_1\|  = \\
 & \frac{\frac{1}{2} ((\|\widetilde{L}_1\|-\|\widetilde{L}_2\|)^2
+ \|\widetilde{L}_0\|^2-\|\widetilde{L}_1\|^2   -
\|\widetilde{L}_0-\widetilde{L}_1\|^2)  }{\|\widetilde{L}_2\|} \geq 0
\end{align*}
again by the one-sided property. Now combining these results, we get
that the GMAP decoder is capacity achieving for the VN
case. The result can be easily generalized to cases with more than two one-sided components.

\vspace{.5cm}
\noindent{\it Discussions:}

The above derivations can also be viewed from a pictorial way. We
take case 2 for $R_2$ for example. The one-sided constraint
$\langle\widetilde{L}_0, \widetilde{L}_1\rangle \geq
\|\widetilde{L}_1\|^2$ says that $\widetilde{L}_0$ lies on the right
side of $\widetilde{L}_1$; but the constraint for
case 2 , \eqref{eqn:gmapvncase2}, precisely implies that $\widetilde{L}_2$ can
only lie in the smaller circle centered at $\widetilde{L}_0$, as in
Figure \ref{diamondA}, but the small circle intersect the large circle only in the hatched region, where 
\begin{align}
\frac{\langle \widetilde{L}_0,\widetilde{L}_2\rangle}{\|\widetilde{L}_2\|} \geq \|\widetilde{L}_1\| \wedge
\|\widetilde{L}_2\|,\label{goodbad}
\end{align}
holds. On the other hand, if we work with the ML metrics, the constraint for case 2 is given by \eqref{cc2}, and how we showed it in the counterexample of section \ref{ml}, this does no longer force $\widetilde{L}_2$ to lie inside the smaller circle centered at $\widetilde{L}_0$, hence inside the hatched region, as Figure \ref{diamondB} and \ref{3D} illustrates it.
\begin{figure}
\begin{center}
\includegraphics[scale=.56]{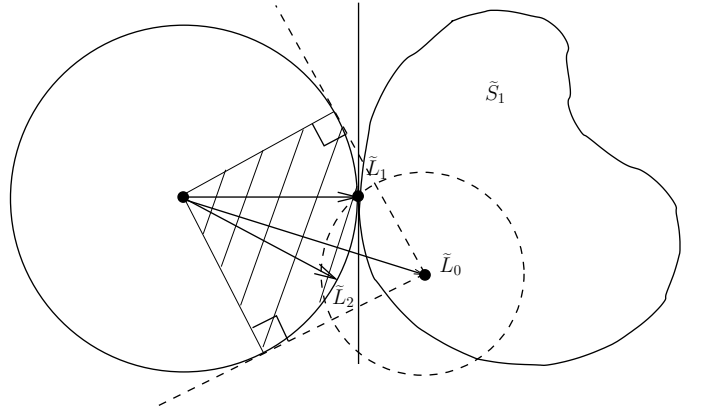}
\caption{Location of $\wtil{L}_2$ where \eqref{goodbad} holds.}
\label{diamondA}
\end{center}
\end{figure}
\begin{figure}
\begin{center}
\includegraphics[scale=.56]{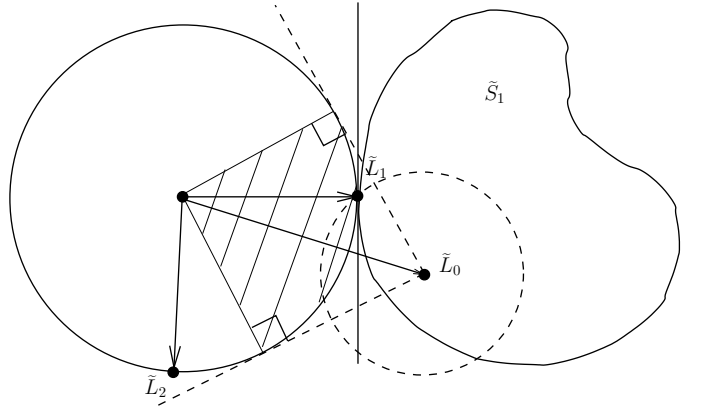}
\caption{Location of $\wtil{L}_2$ where \eqref{goodbad} does not hold.}
\label{diamondB}
\end{center}
\end{figure}
\begin{figure}
\begin{center}
\includegraphics[scale=.54]{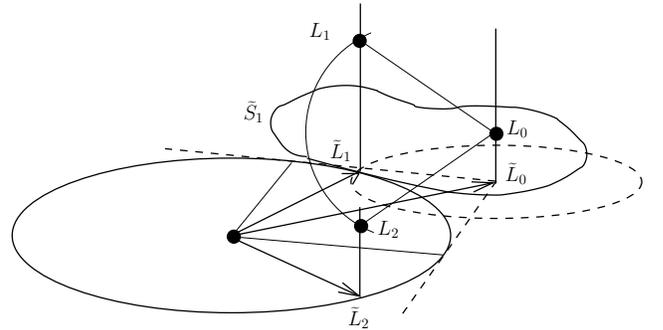}
\caption{This figure illustrates that on a 3-ary input/output VN channels, the non centered directions (living in the 3D space) verify $\| L_0 -L_1\| = \| L_0 -L_2\|$, but this does not influence the position of the centered directions (in the 2D plane) and indeed $\wtil{L}_2$ is in the region where \eqref{goodbad} does not hold, as in Figure \ref{diamondB}.}
\label{3D}
\end{center}
\end{figure}

It is insightful to try to understand the reason that the GMAP
decoder works well while the GLRT fails. For a linear decoder with a
single metric $d:\X\times \Y \mapsto \mR$, if one forms a different
test by picking $d'(x,y) = d(x,y) + f(y)$, for some function $f:\Y
\to \mR$, it is not hard to see that the resulting decision is exactly the
same, for every possible received signal $y$. This is why the ML
decoder and the MAP decoder, from the same mismatched channel $W_1$,
are indeed equivalent, as they differ by a factor of $f=\log (P_X
\circ W_1)_Y$. For a generalized linear decoder with multiple
metrics, $d_1, d_2, \ldots, d_K$, if one changes the metrics to
$d_1+f, d_2+f, \ldots, d_K+f$, for the same function $f$ on $\Y$,
again the resulting decoder is the same. Things are different,
however, if one changes these metrics by different functions, to
have $d_1+f_1, \ldots, d_K+f_K$. The problem is that this changes
the balance between the metrics, which as we observed in the GMAP
story, is critical for the generalized linear decoder to work
properly. For example, if one adds a big number on one of the
metrics to make it always dominate the others, the purpose of using
multiple metrics is defeated. GLRT differs from the GMAP decoder by
factors of $\log (\mu_k)_Y$ on the $k^{th}$ metric, which causes a
bias depending the received signal $y$. The counter example we
presented in the precious section is in essence constructed to
illustrate the effect of such bias. Through a similar approach, one
can indeed show that the GMAP receiver is the {\it unique}
generalized linear receiver, based on the worst channels of
different one-sided components, in the sense that any non-trivial
variation of these metrics, i.e., $f_1, \ldots, f_K$ which are not
the same function, would result in a receiver that does not achieve
the compound capacity in all cases. Counter examples can always be
constructed in a similar fashion.

\section{Linear Decoding for Compound Channel:\\ The General Case}
\label{sec:lift}
\subsection{The Results}
The previous section gives us a series of results regarding linear
decoders on different kinds of compound sets, in the very noisy
setting. While focusing on special channels, the geometric insights we
developed in the previous section is clearly helpful in
understanding the problem in general. In this section, we will show
that indeed most of the results reported in the previous section 
have ``natural" counterparts in the general not very noisy cases.
Moreover, the proofs of these general results often proceed in a
step by step correspondence with that for the very noisy case. We
often refer to such procedure of generalizing the results from the
very noisy case to the general cases, as ``lifting". In the
following, we will first list all the general results, and give
proofs in section \ref{lifting}.

Recall the optimal input distribution of a set $S$ by
$$P_X=\arg \max_{P \in M_1(\X)} \inf_{W \in S} I(P,W),$$
and if the maximizers are not unique, we define $P_X$ to be any arbitrary maximizer.
\begin{definition}\label{onesideddef} {\bf One-sided Set}\\
A set $S$ is one-sided, if
\begin{eqnarray}
D(\mu_0\|\mu_S^p)\geq D(\mu_0\|\mu_S) + D(\mu_S\|\mu_S^p) , \quad \forall W_0 \in S. \label{onesided}
\end{eqnarray}
where
\begin{eqnarray}
W_S=\arg \min_{W \in \mathrm{cl}(S)} I(P_X,W). \label{worsechannel}
\end{eqnarray}
and $\mu_0= P_X \circ W_0, \mu_S = P_X \circ W_S$, are the joint distribution over the channel $W_0$ and $W_S$, respectively.
\end{definition}
\noindent
Note that in order for \eqref{onesided} to hold, the minimizer in \eqref{worsechannel} must be unique.

\begin{proposition}\label{onesidedproposition}
For one-sided sets $S$, the linear decoder induced by the metric $d=\log W_S$ is capacity achieving.
\end{proposition}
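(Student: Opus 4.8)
The plan is to apply Lemma \ref{mism} to the metric $d = \log W_S$: the induced linear decoder achieves, against a true channel $W_0 \in S$, the rate $R(P_X, W_0, \log W_S) = \inf_{\mu\in\A} D(\mu\|\mu_0^p)$ with $\A = \{\mu : \mu_X = P_X,\ \mu_Y = (\mu_0)_Y,\ E_\mu[\log W_S] \geq E_{\mu_0}[\log W_S]\}$. Since $C(S) = I(P_X, W_S) = D(\mu_S\|\mu_S^p)$, it suffices to show $D(\mu\|\mu_0^p) \geq D(\mu_S\|\mu_S^p)$ for every $\mu\in\A$. First I would rewrite the one-sided condition \eqref{onesided} in a usable form: setting $h := \log\frac{W_S}{(\mu_S)_Y}$, the MAP metric of $W_S$, a direct manipulation using $(\mu_S)_X = P_X$ gives $E_{\mu_0}[h] = D(\mu_0\|\mu_S^p) - D(\mu_0\|\mu_S)$, so that \eqref{onesided} is exactly the statement $E_{\mu_0}[h] \geq D(\mu_S\|\mu_S^p) = C(S)$. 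In words, one-sidedness says that the expected MAP score of $W_S$ under the true joint law already dominates the compound capacity.

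The key device is a single change-of-measure inequality. Define the auxiliary distribution $\nu(a,b) := P_X(a)\,(\mu_0)_Y(b)\, e^{h(a,b)} = \mu_S(a,b)\,\frac{(\mu_0)_Y(b)}{(\mu_S)_Y(b)}$. Summing first over $a$ collapses $\mu_S(a,b)$ to $(\mu_S)_Y(b)$, so $\nu$ is a genuine probability distribution on $\X\times\Y$. Because $\mu_0^p = P_X\times(\mu_0)_Y$, one has $\nu = \mu_0^p\, e^h$, hence the exact decomposition $D(\mu\|\mu_0^p) = D(\mu\|\nu) + E_\mu[h]$ holds for every $\mu$; nonnegativity of divergence then yields the clean bound $D(\mu\|\mu_0^p) \geq E_\mu[h]$.

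It then remains to lower bound $E_\mu[h]$ for $\mu\in\A$. Writing $h = \log W_S - \log(\mu_S)_Y$ and using the $Y$-marginal constraint $\mu_Y = (\mu_0)_Y$ to equate $E_\mu[\log(\mu_S)_Y] = E_{\mu_0}[\log(\mu_S)_Y]$, together with the metric constraint $E_\mu[\log W_S] \geq E_{\mu_0}[\log W_S]$, gives $E_\mu[h] \geq E_{\mu_0}[h]$. Chaining with the reformulated one-sided condition $E_{\mu_0}[h] \geq C(S)$ produces $D(\mu\|\mu_0^p) \geq E_\mu[h] \geq E_{\mu_0}[h] \geq C(S)$ for all $\mu\in\A$, which is exactly what is needed; this is the global ``lift'' of the very noisy projection argument of Section \ref{onesidedsec}. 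I expect the main subtlety to be the choice of tilt: it is essential to tilt $\mu_0^p$ by the MAP metric $\log\frac{W_S}{(\mu_S)_Y}$ rather than by $\log W_S$ itself, since precisely the $(\mu_S)_Y$ normalization is what makes $\nu$ integrate to one and simultaneously lets the $Y$-marginal constraint absorb the $b$-dependent part of $h$. Recognizing that \eqref{onesided} is nothing but $E_{\mu_0}[\text{MAP score}]\geq C(S)$ is the conceptual crux that makes the three inequalities line up.
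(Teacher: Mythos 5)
Your proof is correct and is essentially the paper's own argument: both pass to the MAP metric $\log\frac{W_S}{(\mu_S)_Y}$ (using the $Y$-marginal constraint to absorb the $(\mu_S)_Y$ term), establish the chain $D(\mu\|\mu_0^p)\geq E_\mu[h]\geq E_{\mu_0}[h]\geq D(\mu_S\|\mu_S^p)=C(S)$, and read the one-sided condition \eqref{onesided} as exactly the last inequality. The only cosmetic difference is that where the paper invokes the log-sum inequality in the form $D(\mu\|\mu_S)-D(\mu^p\|\mu_S^p)\geq 0$, you package the same fact as nonnegativity of $D(\mu\|\nu)$ for the tilted distribution $\nu=\mu_0^p e^h$.
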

Note that in \cite{cisnar}, the same linear decoder is proved to be
capacity achieving for the case where $S$ is convex.

\begin{proposition}\label{charact}
Convex sets are one-sided and there exist one-sided sets that are not convex.
\end{proposition}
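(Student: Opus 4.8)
The plan is to prove the two assertions separately: that every convex set satisfies \eqref{onesided} via a first-order optimality argument, and that a non-convex one-sided set exists by exploiting the linearity of the one-sided constraint. A useful preliminary step for both parts is to rewrite \eqref{onesided} transparently. Using the elementary identity $D(\mu_0\|\mu_S^p)-D(\mu_0\|\mu_S)=\sum_{a,b}\mu_0(a,b)\log\frac{\mu_S(a,b)}{\mu_S^p(a,b)}=E_{\mu_0}[\log\frac{\mu_S}{\mu_S^p}]$ together with $D(\mu_S\|\mu_S^p)=E_{\mu_S}[\log\frac{\mu_S}{\mu_S^p}]$, condition \eqref{onesided} is seen to be exactly
$$ E_{\mu_0}[\log\tfrac{\mu_S}{\mu_S^p}]\ \geq\ E_{\mu_S}[\log\tfrac{\mu_S}{\mu_S^p}], \qquad \forall W_0\in S. $$
Note that $\log\frac{\mu_S}{\mu_S^p}=\log\frac{W_S}{(\mu_S)_Y}$ is precisely the MAP/Fano metric of the worst channel, so the rewritten condition says that under this metric every $W_0\in S$ scores at least as high as $W_S$ does.

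For the convex case, I would identify this inequality as a first-order optimality condition. Fix $W_0\in S$ and set $g(t)=I(P_X,(1-t)W_S+tW_0)=D(\mu_t\|\mu_t^p)$ with $\mu_t=(1-t)\mu_S+t\mu_0$. A direct differentiation, in which the constant $X$-marginal term drops and the ``$+1$'' contributions cancel because $\mu_0-\mu_S$ has zero mass on each marginal, gives
$$ g'(0^+)=\sum_{a,b}(\mu_0-\mu_S)(a,b)\log\tfrac{\mu_S(a,b)}{\mu_S^p(a,b)} = E_{\mu_0}[\log\tfrac{\mu_S}{\mu_S^p}]-E_{\mu_S}[\log\tfrac{\mu_S}{\mu_S^p}], $$
which is exactly the gap in the rewritten one-sided inequality. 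Convexity now enters: since $S$ is convex the whole segment $\{(1-t)W_S+tW_0\}_{t\in[0,1]}$ stays in $S$, and because $W_S$ minimizes $I(P_X,\cdot)$ over $S$, the function $g$ attains its minimum over $[0,1]$ at $t=0$; hence $g'(0^+)\geq 0$, which is precisely \eqref{onesided}.

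The point requiring care, which I expect to be the main (if mild) obstacle, is the support and differentiability of $g$ at $t=0$. If $\mu_S$ vanished at a coordinate where some $\mu_0$ is positive, the $t\log t$ behaviour of $\mu_t\log\mu_t$ would force $g(t)<g(0)$ for small $t>0$, contradicting the minimality of $W_S$; equivalently, $D(\mu_0\|\mu_S)$ would be infinite and \eqref{onesided} could fail. Thus the minimizer $W_S$ over a convex set automatically has support containing that of every $W_0\in S$, so $D(\mu_0\|\mu_S)<\infty$ and $g'(0^+)$ is finite. This self-dominance of the worst channel is what both legitimizes the derivative computation and guarantees that the minimizer in \eqref{worsechannel} is the one intended.

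For the existence of a non-convex one-sided set, the key observation is that, for a \emph{fixed} candidate worst channel $W_S$, the rewritten inequality $E_{\mu_0}[\log\frac{\mu_S}{\mu_S^p}]\geq E_{\mu_S}[\log\frac{\mu_S}{\mu_S^p}]$ is linear in $\mu_0$, hence a half-space constraint on $W_0$, with $W_S$ lying exactly on the bounding hyperplane. The remaining requirement, that $W_S$ actually be the worst channel, i.e. $I(P_X,W_0)\geq I(P_X,W_S)$ for all $W_0\in S$, is the complement of a convex sublevel set of the convex map $I(P_X,\cdot)$, and is therefore not convex. One can then select a compact, non-convex $S$ inside this region having $W_S$ as its unique minimizer: for instance a finite set $\{W_S,W_1,W_2\}$, or the union of $\{W_S\}$ with two disjoint small balls, whose non-worst members lie strictly in the half-space and have strictly larger mutual information. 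Any such $S$ verifies \eqref{onesided} at every point yet fails to be convex; this is the global analogue of the three-component one-sided set already depicted for the very noisy regime in Figure \ref{onesidedB}, and a concrete binary instance can be written down explicitly to confirm both properties.
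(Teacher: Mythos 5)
Your proof of the first claim is correct, and its mathematical core coincides with the paper's, though the packaging differs. You rewrite \eqref{onesided} as $E_{\mu_0}[\log(\mu_S/\mu_S^p)]\geq E_{\mu_S}[\log(\mu_S/\mu_S^p)]$ and obtain it as the first-order optimality condition $g'(0^+)\geq 0$ of $g(t)=I(P_X,(1-t)W_S+tW_0)$, using convexity to keep the segment inside $\mathrm{cl}(S)$. The paper instead observes the identity $D(\mu\|\mu_S^p)=I(P_X,W)+D(\mu_Y\|(\mu_S)_Y)$, concludes that $\mu_S$ is the I-projection of $\mu_S^p$ onto the convex set of joint distributions $\{P_X\circ W:W\in S\}$, and cites Cs\'isz\'ar's Pythagorean inequality (Theorem 3.1 of \cite{cistut}); since that theorem is itself proved by exactly your segment differentiation, your argument is a self-contained version of what the paper gets by citation. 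One corner of your support discussion is not right, however: if an output letter $b$ has $(\mu_S)_Y(b)=0$ but $(\mu_0)_Y(b)>0$, the $t\log t$ contributions of the joint-entropy and output-entropy terms cancel exactly, so minimality of $W_S$ does \emph{not} force $\supp(\mu_0)\subseteq\supp(\mu_S)$ (e.g.\ $W_S$ a pure-noise channel concentrated on one output letter). In that case both sides of \eqref{onesided} are $+\infty$, so the conclusion survives by convention, but your claimed ``self-dominance'' of the worst channel is not literally true.

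The genuine gap is in your second claim. Definition \ref{onesideddef} is not stated relative to an arbitrary input distribution: $P_X$ is the maximin input $\arg\max_P\inf_{W\in S}I(P,W)$ of the very set $S$ being tested, and $W_S$ is the worst channel for \emph{that} $P_X$. Your construction fixes $P_X$ and $W_S$ a priori, builds the half-space and the region $\{W:I(P_X,W)\geq I(P_X,W_S)\}$ relative to them, and then selects $W_1,W_2$ inside; but nothing guarantees that the maximin input of the resulting set $\{W_S,W_1,W_2\}$ is the $P_X$ you started from. If the maximin input is some other $P'$, then the worst channel and the half-space must be recomputed relative to $P'$, and your verification of \eqref{onesided} no longer applies. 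This is repairable (e.g.\ take all channels symmetric, so that the uniform input is maximin for any such set, and verify \eqref{onesided} explicitly), but as written the construction is circular. The paper sidesteps the issue with a one-line construction you missed: take a convex set and delete an interior point. The punctured set is non-convex, its closure is unchanged (hence so are the maximin input and the worst channel), and since the convex-set argument of the first part holds for \emph{every} input distribution --- a fact the paper notes explicitly --- the inequality \eqref{onesided} survives the deletion verbatim.
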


\begin{proposition}\label{finitesetproposition}
For any set $S$, the decoder maximizing the score function
$G_n=\sup_{W \in S} \log W^n$,  is capacity achieving, but
generalized linear only if $S$ is finite.
\end{proposition}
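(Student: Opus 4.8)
The plan is to establish the two halves of the statement separately: first that the decoder with per‑codeword score $\Phi(\mu):=\sup_{W\in S}E_\mu[\log W]$, applied to the joint type $\mu=\hat P_{(x_m,y)}$, achieves $C(S)$ for every true channel $W_0\in S$; and then that this decoder admits a finite‑$K$ generalized‑linear representation precisely when $S$ is finite.

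For achievability I would reuse the method‑of‑types argument of Lemma \ref{mism}. When $W_0\in S$ is the true channel, the correct codeword $x_1$ and the output $y$ have joint type close to $\mu_0=P_X\circ W_0$, so, choosing $W=W_0$ inside the supremum, the correct score satisfies $\Phi(\hat P_{(x_1,y)})\ge E_{\hat P_{(x_1,y)}}[\log W_0]\to E_{\mu_0}[\log W_0]$; this fixes the threshold $\gamma=E_{\mu_0}[\log W_0]-\delta$. A key observation is that $\Phi$ depends on $(x_m,y)$ only through its joint type, so the union bound over an incorrect codeword reduces to a union over the polynomially many types, \emph{not} over the (possibly uncountable) set $S$. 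Exactly as in Lemma \ref{mism}, the achievable rate is then
\[
R=\min_{\mu\in\A}D(\mu\|\mu_0^p),\qquad \A=\{\mu:\ \mu_X=P_X,\ \mu_Y=(\mu_0)_Y,\ \Phi(\mu)\ge E_{\mu_0}[\log W_0]\}.
\]

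The heart of the matter is the information‑theoretic bound $R\ge C(S)$. Since every $\mu\in\A$ has $\mu^p=\mu_0^p$, we get $D(\mu\|\mu_0^p)=D(\mu\|\mu^p)=I(\mu)$. Next, for any channel $W$ one has $E_\mu[\log W]=-H_\mu(Y|X)-\sum_a P_X(a)\,D(\mu_{Y|X}(\cdot|a)\|W(\cdot|a))\le -H_\mu(Y|X)$, so the constraint $\Phi(\mu)\ge E_{\mu_0}[\log W_0]=-H_{\mu_0}(Y|X)$ forces $H_\mu(Y|X)\le H_{\mu_0}(Y|X)$. Because $\mu_Y=(\mu_0)_Y$ gives $H_\mu(Y)=H_{\mu_0}(Y)$, I conclude $I(\mu)=H_\mu(Y)-H_\mu(Y|X)\ge H_{\mu_0}(Y)-H_{\mu_0}(Y|X)=I(P_X,W_0)$. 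Minimizing over $\A$ yields $R\ge I(P_X,W_0)\ge \inf_{W\in S}I(P_X,W)=C(S)$, which is the claim. (This in fact shows the stronger fact that the decoder attains the true channel's own mutual information, as its supremum contains the matched ML rule for $W_0$.)

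For the second part, the ``if'' direction is immediate: if $S=\{W_1,\dots,W_K\}$ then $\sup_{W\in S}\log W^n=\vee_{k=1}^K\sum_i\log W_k(y(i)|x_m(i))$, a generalized linear decoder with $d_k=\log W_k$. For ``only if'', note that as a function of $\mu$ the score $\Phi(\mu)=\sup_{W\in S}E_\mu[\log W]$ is the support function of the set $\{\log W:W\in S\}$, i.e. a pointwise supremum of linear functionals of $\mu$, whereas a generalized linear score $\vee_{k=1}^K E_\mu[d_k]$ is a maximum of finitely many linear functionals, hence convex and piecewise linear with at most $K$ pieces. Equating the two decision rules up to additive functions of the marginals $\mu_X,\mu_Y$ (which leave the per‑$y$ argmax over codewords unchanged), an infinite $S$ whose active maximizer $W(\mu)\in S$ varies continuously as $\mu$ ranges over a full‑dimensional set of types produces infinitely many distinct linear pieces, contradicting finiteness of $K$; hence $S$ must be finite. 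I expect this last step to be the main obstacle, since the equivalence holds only modulo functions of the marginals — the same ML/MAP degeneracy exploited in Section \ref{uniononesidedsec} — so a fully rigorous version should phrase ``$S$ finite'' as finiteness of the exposed points of $\overline{\mathrm{conv}}\{\log W:W\in S\}$ taken modulo such functions, and I would argue that under the natural non‑degenerate reading these two notions coincide.
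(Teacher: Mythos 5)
Your achievability half is correct and follows essentially the paper's own route. Your entropy identity $E_\mu[\log W]=-H_\mu(Y|X)-\sum_a P_X(a)\,D(\mu_{Y|X}(\cdot|a)\|W(\cdot|a))$ is exactly the paper's step $E_\mu\log\frac{\mu_1}{\mu^p}=D(\mu\|\mu^p)-D(\mu\|\mu_1)$ combined with $D(\mu\|\mu_1)\ge 0$, and both arguments conclude that every $\mu$ in the constraint set satisfies $D(\mu\|\mu_0^p)=D(\mu\|\mu^p)\ge D(\mu_0\|\mu_0^p)=I(P_X,W_0)\ge C(S)$, i.e.\ the decoder attains the matched rate of the true channel. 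One point where you are actually more careful than the paper: the rate formula \eqref{defR} is stated only for a finite number of metrics, and you justify applying it to the sup-over-$S$ score by observing that the score depends on $(x_m,y)$ only through the joint type, so the union bound runs over polynomially many types rather than over the (possibly uncountable) set $S$; the paper uses the corresponding rate expression without comment.

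On the second half (``generalized linear only if $S$ is finite'') you and the paper part ways. The paper proves nothing here: it reads the claim definitionally --- the score $\sup_{W\in S}\log W^n$ is by construction a maximum of $|S|$ additive metrics, and this meets the definition of a generalized linear decoder (finite $K$, independent of $n$) precisely when $S$ is finite. You instead attempt the stronger representability claim that no finite family $d_1,\ldots,d_K$ can reproduce the decision rule when $S$ is infinite, and your own caveat pinpoints why that attempt cannot be completed as stated: since metrics may be shifted by arbitrary functions of $y$ without changing any argmax (the ML/MAP degeneracy of Section \ref{uniononesidedsec}), and since the decision rule depends only on the closed convex hull of $\{\log W: W\in S\}$ modulo such shifts, one can construct infinite sets $S$ whose decoder collapses to finitely many effective metrics; the strong ``only if'' is therefore false without the non-degeneracy hypothesis you sketch but do not prove. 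This gap exists only against your strengthened reading of the proposition; measured against what the paper actually proves, your proposal covers everything the paper's proof does, with an equivalent argument.
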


\begin{proposition}\label{finiteunionml}
For $S = \cup_{k=1}^K S_k$, where $\{S_k\}_{k=1}^K$ are one-sided
sets, the generalized linear decoder induced by the metrics
$d_k=\log W_{S_k}$, for $1 \leq k \leq K$, is not capacity achieving
(in general).
\end{proposition}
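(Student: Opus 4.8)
The plan is to prove the proposition by exhibiting a genuine (non-asymptotic) counterexample, obtained by \emph{lifting} the very noisy counterexample of Proposition \ref{prop:badglrt}. The key conceptual point is that the channels $W_{k,\eps} = \Pn(1+\eps L_k)$ are, for every fixed $\eps>0$, honest discrete memoryless channels; the ``VN limit'' is only the asymptotics as $\eps \searrow 0$. Thus a \emph{strict} failure of the worst-channel GLRT established in the limit will, by continuity in $\eps$, persist at some small but positive $\eps$, yielding a bona fide counterexample in the general (not very noisy) setting.

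Concretely, I would take the three directions $L_0,L_1,L_2$ of Proposition \ref{prop:badglrt} (or a slight perturbation of them, see below), with $\X=\Y=\{0,1\}$ and $P_X=\Pn=\{1/2,1/2\}$, and form the compound set $S_\eps = S_{1,\eps}\cup S_{2,\eps}$ with $S_{1,\eps}=\{W_{0,\eps},W_{1,\eps}\}$ and $S_{2,\eps}=\{W_{2,\eps}\}$. The worst-channel GLRT then uses the metrics $d_{1,\eps}=\log W_{1,\eps}$ and $d_{2,\eps}=\log W_{2,\eps}$. Two facts must be checked for all small $\eps$: (i) each $S_{k,\eps}$ is one-sided in the full sense of Definition \ref{onesideddef}; and (ii) the achievable rate $R(P_X,W_{0,\eps},\{d_{1,\eps},d_{2,\eps}\})$ falls strictly below $C(S_\eps)$. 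For (i), $S_{2,\eps}$ is a singleton, hence trivially one-sided since \eqref{onesided} reduces to $D(\mu_2\|\mu_2^p)\geq D(\mu_2\|\mu_2)+D(\mu_2\|\mu_2^p)$, i.e. $0\geq 0$, for every $\eps$. The set $S_{1,\eps}$ is a two-point set whose one-sidedness is a single divergence inequality which, normalized by $\frac{2}{\eps^2}$, converges to the strict VN condition \eqref{defonesided}; the latter holds with a strictly positive gap for $L_0,L_1$, since $\|\wtil{L}_0\|^2-\|\wtil{L}_1\|^2-\|\wtil{L}_0-\wtil{L}_1\|^2>0$. By continuity in $\eps$, \eqref{onesided} then holds at small $\eps$, with $W_{1,\eps}$ the unique worst channel because $\|\wtil{L}_1\|^2<\|\wtil{L}_0\|^2$.

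For (ii) I would invoke the computation behind Proposition \ref{prop:badglrt}: in the VN limit the normalized rate $\frac{2}{\eps^2}(R_{1,\eps}\wedge R_{2,\eps})$ converges to a value strictly below $\frac{2}{\eps^2}C(S_\eps)\to \|\wtil{L}_1\|^2\wedge\|\wtil{L}_2\|^2$, because \eqref{bug} fails strictly. Indeed $\langle\wtil{L}_0,\wtil{L}_2\rangle<0$ drives $R_{2,\eps}$ to $0$ in the limit (equivalently $\|\wtil{L}_0\|^2-\|\wtil{L}_2\|^2-\|\wtil{L}_0-\wtil{L}_2\|^2<0$), while $R_{1,\eps}$ stays at least the compound capacity by the one-sided property of $\Sc_1$. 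Writing $\rho=\lim \frac{2}{\eps^2}R(P_X,W_{0,\eps},\{d_{k,\eps}\})<\kappa=\lim \frac{2}{\eps^2}C(S_\eps)$, the convergence of both normalized quantities gives $R(P_X,W_{0,\eps},\{d_{k,\eps}\})<C(S_\eps)$ for all sufficiently small $\eps>0$. Since the $W_{k,\eps}$ are genuine DMCs and $S_\eps$ is a finite union of one-sided sets, this is exactly the counterexample that establishes the proposition.

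The main obstacle I anticipate is the boundary nature of the original example, where $\|L_0-L_1\|=\|L_0-L_2\|$ places it on the dividing line between the two cases \eqref{eqn:case1}--\eqref{eqn:case2}; at that boundary the case assignment at positive $\eps$ is governed by higher-order ($\eps^3$) terms, so the VN limit of $R_{2,\eps}$ is not robustly pinned down. The clean fix — anticipated in the text's remark that these equalities ``are not necessary'' — is to perturb the directions slightly so that $\|L_0-L_2\|<\|L_0-L_1\|$ strictly; then for small $\eps$ the decoder is unambiguously in case 2, $R_{2,\eps}$ equals the mismatched rate $R(P_X,W_{0,\eps},\log W_{2,\eps})$ exactly, and Proposition \ref{vnmismprop} gives its VN limit as $0$ since $\langle\wtil{L}_0,\wtil{L}_2\rangle$ remains negative under a small perturbation. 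All the strict inequalities above then lift simultaneously, and what remains is the routine verification that divergences and mutual informations depend smoothly on $\eps$ through \eqref{eqn:local}, so that strict limit inequalities transfer to small positive $\eps$.
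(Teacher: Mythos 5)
Your proposal is correct and follows essentially the same route as the paper: the paper's own proof of Proposition \ref{finiteunionml} is the one-line observation that the very noisy counterexample of Proposition \ref{prop:badglrt} already lives in the general setting (each $W_{k,\eps}$ is a genuine DMC for fixed $\eps>0$), which is precisely the lifting-by-continuity argument you spell out. Your additional care --- verifying that $S_{1,\eps}$ and $S_{2,\eps}$ are one-sided in the sense of Definition \ref{onesideddef} at fixed small $\eps$, and perturbing away from the boundary case $\|L_0-L_1\|=\|L_0-L_2\|$ --- only makes explicit the details the paper leaves implicit.
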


The following Theorem is the main result of the paper.
\begin{theorem}\label{finiteunionap}
For $S = \cup_{k=1}^K S_k$, where $\{S_k\}_{k=1}^K$ are one-sided
sets, the generalized linear decoder induced by the metrics
$d_k=\log \frac{W_{S_k}}{(\mu_{S_k})_Y}$, for $1 \leq k \leq K$ is
capacity-achieving.
\end{theorem}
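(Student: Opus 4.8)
The plan is to lower-bound the rate achieved by the proposed decoder and show it never falls below $C(S)$, by \emph{lifting} the very noisy computation of Section \ref{uniononesidedsec} but replacing the case-by-case projection arguments with a single exact divergence identity. Fix the true channel $W_0 \in S$; since $S = \cup_{k=1}^K S_k$, there is an index $j$ with $W_0 \in S_j$. Abbreviate $W_k := W_{S_k}$, $\mu_k := P_X \circ W_k$, and $d_k = \log\frac{W_k}{(\mu_k)_Y}$. By \eqref{projs}, the achievable rate is $R = \bigwedge_{k=1}^{K} R_k$ with $R_k = \inf_{\mu \in \A_k} D(\mu\|\mu_0^p)$, where $\A_k = \{\mu : \mu^p = \mu_0^p,\ E_\mu[d_k] > \theta\}$ and $\theta := \bigvee_{l=1}^{K} E_{\mu_0}[d_l]$. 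It suffices to prove $R_k \geq C(S)$ for every $k$.

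The crux is to rewrite the MAP metric as $d_k = \log\frac{\mu_k}{\mu_k^p}$, which immediately gives, for any $\mu$, the identity $E_\mu[d_k] = D(\mu\|\mu_k^p) - D(\mu\|\mu_k)$. Restricting to $\mu$ with $\mu^p = \mu_0^p$, I would expand $D(\mu\|\mu_k^p) = D(\mu\|\mu_0^p) + D((\mu_0)_Y\|(\mu_k)_Y)$ (the two product distributions differ only in their $Y$-marginal) and apply the chain rule $D(\mu\|\mu_k) = D((\mu_0)_Y\|(\mu_k)_Y) + \sum_b (\mu_0)_Y(b)\, D(\mu_{X|Y=b}\|\mu_{k,X|Y=b})$. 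The two $D((\mu_0)_Y\|(\mu_k)_Y)$ terms cancel, leaving the exact relation
$$E_\mu[d_k] = D(\mu\|\mu_0^p) - \sum_b (\mu_0)_Y(b)\, D(\mu_{X|Y=b}\|\mu_{k,X|Y=b}) \leq D(\mu\|\mu_0^p).$$
Hence every $\mu \in \A_k$ obeys $D(\mu\|\mu_0^p) \geq E_\mu[d_k] > \theta$, so that $R_k \geq \theta$ holds for all $k$ simultaneously, with no need to split into the Cases \eqref{eqn:gmapvncase1}--\eqref{eqn:gmapvncase2} of the very noisy analysis.

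It then remains to show $\theta \geq C(S)$, and here I would use only the component $S_j$ containing $W_0$. By the same identity, $E_{\mu_0}[d_j] = D(\mu_0\|\mu_j^p) - D(\mu_0\|\mu_j)$, and the one-sidedness of $S_j$, i.e. \eqref{onesided} with $W_S = W_{S_j}$, is exactly $D(\mu_0\|\mu_j^p) - D(\mu_0\|\mu_j) \geq D(\mu_j\|\mu_j^p) = I(P_X, W_j)$. Since $W_j$ is the worst channel of $S_j$ and $C(S) = \bigwedge_m I(P_X, W_m)$, this yields $\theta \geq E_{\mu_0}[d_j] \geq I(P_X, W_j) \geq C(S)$, whence $R = \bigwedge_k R_k \geq \theta \geq C(S)$; as $W_0$ was an arbitrary channel of $S$, the decoder is capacity achieving. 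The usual strengthening of the random-coding argument (cf. the Remark after \eqref{eqn:cc} and \cite{BBT}) then upgrades this ensemble statement to the existence of a single codebook universally good on all of $S$.

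The main obstacle, and the whole reason for using the MAP rather than the ML metric, is to obtain a bound on $R_k$ valid for the indices $k \neq j$, where $W_0 \notin S_k$ and the one-sided hypothesis on $S_k$ says nothing about $W_0$; this is precisely the regime where GLRT fails (Propositions \ref{prop:badglrt} and \ref{finiteunionml}). The resolution is the cancellation above: writing $d_k$ as the log-density ratio $\log(\mu_k/\mu_k^p)$ removes the $y$-dependent bias $E_{(\mu_0)_Y}[\log(\mu_k)_Y]$ that the ML metric $\log W_k$ would carry and that differs across $k$, so that $E_\mu[d_k] \leq D(\mu\|\mu_0^p)$ holds uniformly. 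Checking that this is exactly the term responsible for the very noisy counterexample, and that the chain-rule decomposition is legitimate on the marginal-constrained set (in particular, the needed absolute continuity of the relevant conditionals), is where the care is required, but the argument is otherwise a direct global counterpart of the tilde-vector computation in Section \ref{uniononesidedsec}.
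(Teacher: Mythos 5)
Your proof is correct and is essentially the paper's own argument: the identity $E_\mu[d_k] = D(\mu\|\mu_k^p) - D(\mu\|\mu_k) \leq D(\mu\|\mu_0^p)$ (your chain-rule cancellation is just an explicit form of the paper's log-sum inequality step (B)), followed by the metric constraint and the one-sided property of the component containing $W_0$, exactly as in steps (A)--(D) of the paper's proof. The only cosmetic differences are that you decompose the rate via \eqref{projs} into per-metric infima rather than working with the single constraint set $\A$, and you name the nonnegative gap term as a conditional divergence rather than invoking the log-sum inequality.
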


\vspace{1cm}


\subsection{Proofs: Lifting Local to Global Results}\label{lifting}
In this section, we illustrate how the results and proofs obtained
in section \ref{ludvn} in the very noisy setting can be lifted to
results and proofs in the general setting. We first consider the
case of one-sided sets. By revisiting the definitions made in
section \ref{onesidedsec}, we will try to develop a ``naturally"
corresponding notion of one-sidedness for the general problems.

By definition of a VN one-sided set, $\Sc$ is such that
\begin{eqnarray}
\|\widetilde{L}_0\|^2 -\|\widetilde{L}_ {\Sc}\|^2
-\|\widetilde{L}_0-\widetilde{L}_ {\Sc}\|^2  \geq 0, \quad  \forall
L_0 \in \Sc. \label{onesidednorm}
\end{eqnarray}
Next, we find the divergences, for the general problems, whose very
noisy representations are these norms: recall that
\begin{eqnarray}
\label{eqn:correspond1}
 D(\mu_0\|\mu_0^p) \vnl  \| L_0 -\bar{L}_0 \|^2=\| \wtil{L}_0
\|^2
\end{eqnarray}
and
\begin{eqnarray}
\label{eqn:correspond2}
 D(\mu_S\|\mu_S^p) \vnl  \| L_S -\bar{L}_S \|^2=\|
\wtil{L}_S \|^2.
\end{eqnarray}

On the other hand, we also have
$$ D(\mu_0\|\mu_S) \vnl   \| L_0 -L_S \|^2$$
and
$$ D(\mu_0^p\|\mu_S^p) \vnl   \| \bar{L}_0 -\bar{L}_S \|^2,$$
and hence
\begin{eqnarray}
&& D(\mu_0\|\mu_S) -  D(\mu_0^p\|\mu_S^p) \vnl \label{logsumlift} \\
&&\qquad \| L_0 -L_S \|^2-\| \bar{L}_0 -\bar{L}_S \|^2= \|
\wtil{L}_0 -\wtil{L}_S \|^2 \notag
\end{eqnarray}
where the last equality simply uses the projection principle, i.e.,
that the projection of $L$ onto the centered directions given by $\wtil{L}=L-\bar{L}$, is orthogonal to the projection's height
$\bar{L}$, implying $$ \lVert \wtil{L} \rVert^2= \lVert L \rVert^2-
\lVert \bar{L} \rVert^2 .$$

Now, by reversing the very noisy approximation in
\eqref{eqn:correspond1}, \eqref{eqn:correspond2} and
\eqref{logsumlift}, we get that
\begin{eqnarray*}
 D(\mu_0\|\mu_0^p) - D(\mu_S\|\mu_S^p) - (D(\mu_0\|\mu_S) -  D(\mu_0^p\|\mu_S^p) ) \geq 0
\end{eqnarray*}
for all $W_0\in S$, can be viewed as a ``natural" counterpart
of \eqref{defonesided}, hence of the VN one-sided definition. With a little simplification, this
inequality is equivalent to
\begin{eqnarray}
 D(\mu_0\|\mu_S^p) \geq D(\mu_0\|\mu_S)+ D(\mu_S\|\mu_S^p), \qquad \forall W_0 \in S. \label{liftonesided}
\end{eqnarray}
Therefore, we use this as the definition of the general one-sided sets, as expressed  in Definition \ref{onesideddef}.

Clearly, as we mechanically generalized the notion of one-sided sets
from a special very noisy case to the general problem, there is no
reason to believe at this point that the resulting one-sided sets will
have the same property in the general setting, than their counterparts in the very noisy case; namely, that the linear decoder induced from the worst channel achieves the compound capacity. However, this turns out to be true, and the proof again
follows closely the corresponding proof of the very noisy
special case.

\vspace{0.5cm}

\begin{proof} {\it of Proposition \ref{onesidedproposition}.}

Recall that in the VN case, when the actual channel is $W_{0,
\epsilon}$, and the decoder uses metric $d_\epsilon= \log W_{1, \epsilon}$, the
achievable rate, in terms of the corresponding centered directions
$\widetilde{L}_0,
\widetilde{L}_1$, is given by, cf. \eqref{optim},
\begin{eqnarray}
\lim_{\epsilon \to 0} \frac{2}{\epsilon^2} R( P_X, W_{0, \epsilon},
d_{\epsilon}) = \inf_{\widetilde{L}: \langle \widetilde{L},
\widetilde{L_1}\rangle \geq \langle\widetilde{L}_0,
\widetilde{L}_1\rangle} \|\widetilde{L}\|^2
\end{eqnarray}

The constraint of the optimization can be rewritten in norms as
\begin{eqnarray}
\label{eqn:vnnorm}
\widetilde{L}: \|\widetilde{L}\|^2 -
\|\widetilde{L}-\widetilde{L}_1\|^2 \geq \|\widetilde{L}_0\|^2 -
\|\widetilde{L}_0-\widetilde{L}_1\|^2
\end{eqnarray}
Now if $\widetilde{L}_0$ lies in a one-sided set $S$, and we use
decoding metric as the worst channel $\widetilde{L}_1 =
\widetilde{L}_S$, by using definition \eqref{onesidednorm}, and
recognizing that $\|\widetilde{L}-\widetilde{L}_1\|^2$ is
non-negative, this constraint implies
\begin{eqnarray}
\|\widetilde{L}\|^2 \geq \|\widetilde{L}_0\|^2 -
\|\widetilde{L}_0-\widetilde{L}_S\|^2 \geq \|\widetilde{L}_S\|^2,
\quad \forall \widetilde{L}_0\in S, \label{repl}
\end{eqnarray}
form which we conclude that the compound capacity is achievable. The
proof of Proposition \ref{onesidedproposition} replicates these steps
closely.

First, we write in the general setting, the mismatched mutual
information is given by
 \begin{eqnarray}
R(P_X, W_0, \log W_S)=  \inf_{\mu \in \A_S} D(\mu\|\mu^p) \label{last}
\end{eqnarray}
where $$\A_S=\{  \mu_X=P_X, \mu_Y=(\mu_0)_Y ,  E_{\mu} \log W_S \geq
E_{\mu_0} \log W_S\}.$$

Since we consider here a linear decoder, i.e. induced by only one single-letter metric, we can consider equivalently the ML or MAP metrics. We then work with the MAP metric and the constraint set is equivalently expressed as:
$$\A_S=\{  \mu_X=P_X, \mu_Y=(\mu_0)_Y ,  E_{\mu} \log \frac{W_S}{ (\mu_S)_Y} \geq E_{\mu_0} \frac{W_S}{ (\mu_S)_Y}\}.$$
Expressing the quantities of interest in terms of divergences, we
write
\begin{eqnarray*}
&&E_{\mu} \left[ \frac{W_S}{ (\mu_S)_Y} \right]\\
&=& E_{\mu} \left[ \log \frac{W_S}{(\mu_S)_Y}
\frac{\mu}{\mu} \frac{\mu^p}{\mu^p}\right] \\
&=& D(\mu\|\mu^p) - D(\mu\| \mu_S) + D(\mu^p\|\mu_S^p)
\end{eqnarray*}
Similarly we have
$$E_{\mu_0} \frac{W_S}{ (\mu_S)_Y} =D(\mu_0\|\mu_0^p) - D(\mu_0\| \mu_S) + D(\mu_0^p\|\mu_S^p).$$
%
Thus we can rewrite $\A_S$ as
\begin{eqnarray}
\label{eqn:gennorm}
\A_S &=& \{ \mu: \mu_X= P_X, \mu_Y = (\mu_0)_Y \nonumber\\
&&D(\mu\|\mu^p)-D(\mu\|\mu_S)+D(\mu^p\|\mu_S^p)  \nonumber\\
&& \geq D(\mu_0\|\mu_0^p)-D(\mu_0\|\mu_S)+D(\mu_0^p\|\mu_S^p) \}
\end{eqnarray}
It worth noticing that this is precisely the lifting of
\eqref{eqn:vnnorm}.

Now, in the VN limit, $D(\mu\|\mu_S)-D(\mu^p\|\mu_S^p)$ is given by
$\| L - L_S \|^2-\| \bar{L} - \bar{L}_S \|^2=\| \wtil{L} - \wtil{L}_S \|^2$, which is clearly positive. Here, we have that
$$D(\mu\|\mu_S)-D(\mu^p\|\mu_S^p)\geq 0,$$
is a direct consequence of log-sum inequality, and with this,
we can write for all $\mu \in \A_S$,
\begin{eqnarray*}
D(\mu\|\mu^p) \geq
D(\mu_0\|\mu_0^p)-D(\mu_0\|\mu_S)+D(\mu_0^p\|\mu_S^p)
\end{eqnarray*}
which is in turn lower bounded by $D(\mu_S\|\mu_S^p)= I(P_X, W_S)$,
provided that the set $S$ is one-sided, cf. \eqref{onesideddef} (note that last lines are again a lifting of \eqref{repl}).
Thus, the compound capacity is achieved.
\end{proof}

This general proof can indeed be shortened. Here, we emphasize the
correspondence with the proof for the very noisy case, in order to
demonstrate the insights one obtains by using the local
geometric analysis.

\vspace{0.5cm}

\begin{proof}{\it of Lemma \ref{charact}.}

Let $C$ a convex set, then for any input distribution $P_X $ the set
$D=\{\mu |
\mu(a,b)=P_X(a)W(b|a), W
\in C \}$ is a convex set as well. For $\mu$ such that
$\mu(a,b)=P_X(a)W(b|a)$, we have
$$D(\mu \| \mu_C^p)=I(P_X,W)+D(\mu_Y\|(\mu_C)_Y),$$
hence we obtain, by definition of $W_C$ being the worse channel of $\mathrm{cl}(C)$,
$$\mu_C=\min_{\mu \in \mathrm{cl}(D)} D(\mu \| \mu_C^p).$$
Therefore, we can use theorem 3.1. in \cite{cistut} and for any $\mu_0 \in D$, we have
the pythagorean inequality for convex sets
\begin{eqnarray}
D(\mu_0\|\mu_C^p)\geq D(\mu_0\|\mu_C) + D(\mu_C\|\mu_C^p). \label{pyth}
\end{eqnarray}
This concludes the proof of the first claim of the Proposition. Now
to construct a one-sided set that is not convex, one can simply take
a convex set and remove one point in the interior, to create a
"hole". This does not affect the one-sidedness, but makes the set
non-convex. It also shows that there are sets that are one-sided (and not convex) for all input distributions, so the one-sidedness does not have to depend on which input distribution is chosen.
\end{proof}
Proposition
\ref{charact} says that our definition of one-sided sets is strictly
more general than convex sets. This generalizes the known
result \cite{cisnar} on when does linear receiver achieve compound
capacity, but more importantly, our definition leads to the
meaningful use of generalized linear decoders with finite number of
metrics: it is easy to construct an example of compound set with an
infinite number of disconnected convex components; but the notion of
finite unions of one-sided sets is general enough to include most
compound sets that one can be exposed to. 

In the next proofs, we no longer give explicitly the analogy with the VN setting.
\vspace{0.5cm}

%

\begin{proof}{\it of Proposition \ref{finitesetproposition}.}

We need to show the following
\begin{align*}
&\wedge_{W_1 \in S} \inf_{\mu : \,\mu^p = \mu_0^p, E_{\mu} \log W_1 \geq  \vee_{W \in S} E_{\mu_0} \log W} D(\mu\|\mu_0^p) \\&  \geq \wedge_{W \in S} I(P_X,W),
\end{align*}
and we will see that the left hand side of this inequality is equal
to $I(P_X,W_0)$. Note that $\vee_{W \in S}  E_{\mu_0} \log
W=E_{\mu_0} \log W_0 = I(P_X,W_0)$. Thus, the desired inequality is
equivalent to $\forall W_1 \in S$,
\begin{eqnarray}
\inf_{\mu : \, \mu^p=\mu_0^p, E_{\mu} \log W_1 \geq E_{\mu_0} \log W_0} D(\mu\|\mu_0^p) \geq \wedge_{W \in S} I(P_X,W).
\label{ineq2}
\end{eqnarray}
Using the marginal constraint $\mu^p=\mu_0^p$, we
have
\begin{eqnarray}
&&E_{\mu} \log W_1 \geq E_{\mu_0} \log W_0  \notag \\
&\Leftrightarrow&E_\mu\left[\log \frac{\mu_1}{\mu^p}\right] \geq
E_{\mu_0}\left[\log \frac{\mu_0}{\mu_0^p}\right] \notag\\
&\Leftrightarrow& D(\mu\|\mu^p) - D(\mu\| \mu_1) \geq D(\mu_0\|
\mu_0^p)
\label{tighter}
\end{eqnarray}
using the fact that $D(\mu\|\mu_1)\geq 0$, we have
\begin{align}
&\inf_{\mu : \,  \mu^p=\mu_0^p,  E_{\mu} \log W_1 \geq E_{\mu_0} \log W_0} D(\mu\|\mu_0^p) \notag \\
&= \inf_{\mu : \,  \mu^p =\mu_0^p, D(\mu\|\mu^p) - D(\mu\|\mu_1)\geq D(\mu_0\|\mu_0^p)} D(\mu\|\mu^p) \label{altern} \\
&\geq D(\mu_0\|\mu_0^p) = I(P_X,W_0).
\end{align}
This concludes the proof of the Proposition. In fact, one could get
a tighter lower bound by expressing
\eqref{tighter} as
\begin{align*}
&E_{\mu} \log W_1 \geq E_{\mu_0} \log W_0  \, \Leftrightarrow \\
& D(\mu\|\mu^p) - (D(\mu\|\mu_1)- D(\mu^p\|\mu_1^p))\\
&\geq D(\mu_0\|\mu_0^p) +D(\mu_0^p\|\mu_1^p),
\end{align*}
and using the log-sum inequality to show that $D(\mu\|\mu_1)- D(\mu^p\|\mu_1^p) \geq 0$, \eqref{altern} is lower bounded
by
$$D(\mu_0\|\mu_0^p) +D(\mu_0^p\|\mu_1^p).$$
Figure \ref{2planes} illustrates this gap.
\end{proof}
\begin{figure}
\begin{center}
\includegraphics[scale=1]{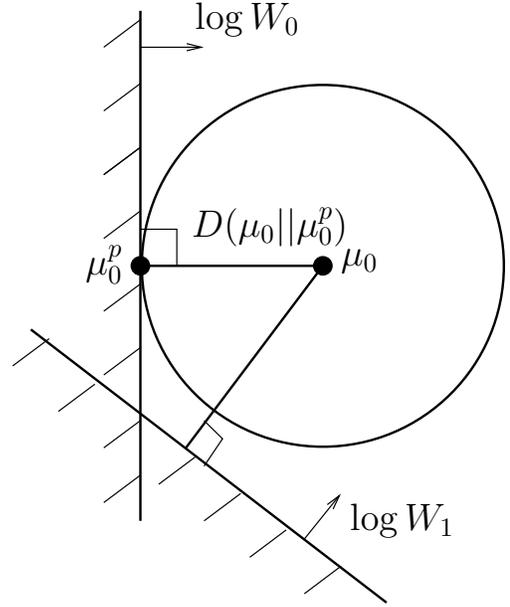}
\caption{This figure represent the left hand side of \eqref{ineq2}. It indeed represents two cases: when $W_1=W_0$ and when $W_1$ is an arbitrary channel in $S$.
The planes in the figure represent the constraint sets appearing in the optimization  for each of these cases. The fact that the twisted plane is not tangent to the divergence ball with radius $D(\mu_0 \| \mu_0^p)$ illustrates the gap pointed out in the proof of Proposition \eqref{finitesetproposition}.
}
\label{2planes}
\end{center}
\end{figure}
\begin{proof}{\it of Proposition \ref{finiteunionml}.} 

We found a counter-example for
the very noisy setting in section \ref{uniononesidedsec}, therefore
the negative statement holds in the general setting.
\end{proof}

\vspace{0.5cm}

\begin{proof}{\it of Theorem\ref{finiteunionap}.}

We need to show
\begin{eqnarray}
 \inf_{\mu \in \A} D(\mu\|\mu_0^p)
  \geq \wedge_{k=1}^K  I(P_X,W_k), \label{main}
\end{eqnarray}
where $\A$ contains all joint distributions $\mu$ such that
\begin{eqnarray}
\mu_X=P_X,\,\,\, \mu_Y=(\mu_0)_Y, \label{c1}\\
\vee_{k=1}^K E_{\mu} \log \frac{W_k}{ (\mu_k)_Y} \geq \vee_{k=1}^K E_{\mu_0} \log \frac{W_k}{ (\mu_k)_Y} . \label{c2}
\end{eqnarray}
We can assume w.l.o.g. that $W_0 \in C_1$.
We then have
\begin{eqnarray*}
D(\mu\|\mu_0^p) &\stackrel{\text{(A)}}{=}& D(\mu\|\mu^p)  \\
& \stackrel{\text{(B)}}{\geq} & \vee_{k=1}^K E_{\mu} \log \frac{W_k}{ (\mu_k)_Y} \\
&\stackrel{\text{(C)}}{\geq} & \vee_{k=1}^K E_{\mu_0} \log \frac{W_k}{ (\mu_k)_Y}  \\
&\geq& E_{\mu_0} \log \frac{W_1}{(\mu_1)_Y} \\
&\stackrel{\text{(D)}}{\geq} & E_{\mu_1} \log \frac{W_1}{(\mu_1)_Y} \\
&=& I(P_X,W_1)\\
&\geq& \wedge_{k=1}^K  I(P_X,W_k),
\end{eqnarray*}
where (A) uses \eqref{c1},  (B) uses the log-sum inequality:
\begin{eqnarray*}
 E_{\mu} \log \frac{W_k}{ (\mu_k)_Y} &=& D(\mu\|\mu^p)  + E_{\mu} \log \frac{W_k}{ (\mu_k)_Y} - D(\mu\|\mu^p) \\
 &= &  D(\mu\|\mu^p)-  \underbrace{( D(\mu\|\mu_k)- D(\mu^p\|\mu_k^p))}_{\geq 0 },
\end{eqnarray*}
(C) is simply \eqref{c2} and (D) follows from the one-sided
property:
\begin{eqnarray*}
 && E_{\mu_0} \log \frac{W_1}{(\mu_1)_Y} - E_{\mu_1} \log \frac{W_1}{(\mu_1)_Y}\\
&=& D(\mu_0\|\mu_1^p)-   D(\mu_0\|\mu_1) - D(\mu_1\|\mu_1^p) \\
&\geq& 0.
\end{eqnarray*}
\end{proof}

\subsection{Discussions}

We raised the question whether it is possible for a decoder to be both linear and
capacity achieving on compound channels.
We showed that if the compound set is a union of one-sided sets, a
generalized linear which is capacity achieving decoder exists. We constructed it as
follows:  if $W_1,\ldots,W_K$ are the worst channels of each
component (cf. figure \ref{algo}), use the generalized linear
decoder induced by the MAP metrics $\log
\frac{W_1}{(\mu_1)_Y},\ldots,\log \frac{W_K}{(\mu_K)_Y}$, i.e.,
decode with

$$ G_n(y)=\arg\max_{m\in\{1,\ldots,M\}} \vee_{k=1}^K E_{\hat{P}_{(x_m,y)}} \log \frac{W_k}{(\mu_k)_Y}  ,$$

where $\mu_k = P_X \circ W_k$, $P_X$ is the
optimal input distribution on $S$, and $\hat{P}_{(x_m, y)}$ is the
joint empirical distribution of the $m^{th}$ codeword $x_m$ and the
received word $y$. We denote this decoder by
$\text{GMAP}(W_1,\ldots,W_K)$. We also found that using the ML
metrics, instead of the MAP metrics $W_1,\ldots,W_K$, i.e.
$\text{GLRT}(W_1,\ldots,W_K)$, is not capacity achieving. \\

\begin{figure}
\begin{center}
\includegraphics[scale=.75]{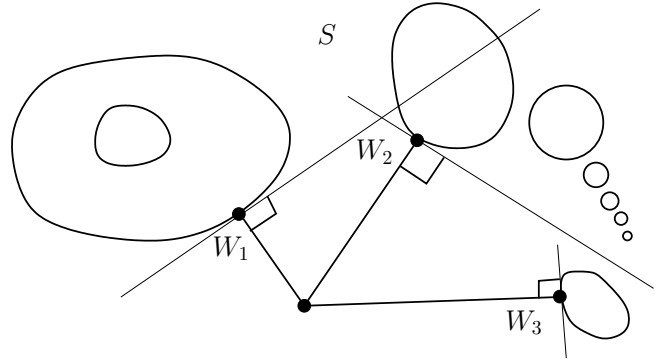}
\caption{GMAP with worst channels algorithm: here $S$ is represented by the union of all sets appearing in the figure. In this set, there are however only three one-sided components with respective worst channels $W_1, W_2$ and $W_3$, hence, decoding with the generalized linear decoder induced by the three corresponding MAP metrics is capacity achieving. MMI instead would have required an optimization of infinitely many metrics given by all possible DMC's.}
\label{algo}
\end{center}
\end{figure}

It is instrumental to compare our receiver with the MMI receiver. We observe that if the codeword
$x_m$ is chosen from a fixed composition $P_X$ code, the empirical
mutual information
\begin{eqnarray}
I(\hat{P}_{(x_m, y)})=\sup_{W } E_{\hat{P}_{(x_m, y)}} \log
\frac{W}{(P_X
\circ W)_Y} \label{mmiap}
\end{eqnarray}
where the maximization is taken over all possible DMC $W$, which means that
the MMI is actually the GMAP decoders taking into account all DMC's.
Our result says that we do not need to enumerate all DMC metrics to
achieve capacity, for a given compound set $S$, we can restrict
ourself to selecting carefully a subset of all metrics and yet
achieve the compound capacity. Those important metrics are found by
extracting the one-sided components of $S$, and taking the MAP
metrics induced by the worst channel of these components. When $S$
has a finite number of one-sided components, this decoder is
generalized linear.  The key step is to understand the structure of
the space of decoding metrics. The geometric insights gives rise to
a notion of which channels are dominated by which (with the
one-sided property) and how to combine the dominant representatives
of each components (Generalized MAP metrics).

We argued that the family of sets that can be written as finite
unions of one-sided sets covers a large variety of sets, even larger than the family of sets having finite unions of convex components. This means that the generalized linear
decoders with finitely many metrics can be found to achieve capacity
for a large family of compound sets. Yet, there do exist compound sets that are not even a finite union of
one-sided components. To see this, we can go back to the local 
geometric picture and imagine a compound set with infinitely many worst channels, for which the procedure shown in Figure \ref{algo} has to go through
an infinite number of steps. We argue, however, that such examples
are pedagogical, in the sense that if one is willing to give up a
small fraction of the capacity, then a finite collection of linear
decoding metrics would suffice. Moreover, there is a graceful
tradeoff between the number of metrics used, and the loss in
achievable rate.

Even more interestingly, one can develop a notion of a "blind"
generalized linear decoder, which does not even require the knowledge of
the compound set, yet guarantees to achieve a fraction of the
compound capacity. We describe here such decoders in the VN setting. As illustrated in Figure \ref{poly}, such decoders are induced by a set of metrics chosen in a "uniform" fashion. For a given compound set, we can then grow a polytope whose faces are the hyperplane orthogonal to these metrics and there will be a largest such polytope, that contains the entire compound set in its complement. This determines the rate that can be achieved with such a decoder on a given compound set, cf. $C_{\mathrm{poly}}$ in Figure \ref{poly}. In general $C_{\mathrm{poly}}$ is strictly less than the compound capacity, denoted by $C$ in Figure \ref{poly}; the only cases where $C=C_{\mathrm{poly}}$ is if by luck, one of the uniform direction is along the worst channel (and if there are enough metrics to contain the whole compound set).
Now, for a number $K$ of metrics, no matter what the compound set looks like, and not matter what its capacity is, the ratio between $C_{\mathrm{poly}}$ and $C$ can be estimated: in the VN geometry, this is equivalent to picking a sphere with radius $C$ and to compute the ratio between $C$ and the ``inner radius'' of a K-polytope inscribed in the sphere.
It is also clear that the higher the number of metrics is, the closer $C_{\mathrm{poly}}$ to $C$ is, and this controls the tradeoff between the computational complexity and the achievable rate. Again, as suggested by the very noisy picture, there is a graceful tradeoff between the number of metrics used, and the loss in achievable rate.
\begin{figure}
\begin{center}
\includegraphics[scale=.75]{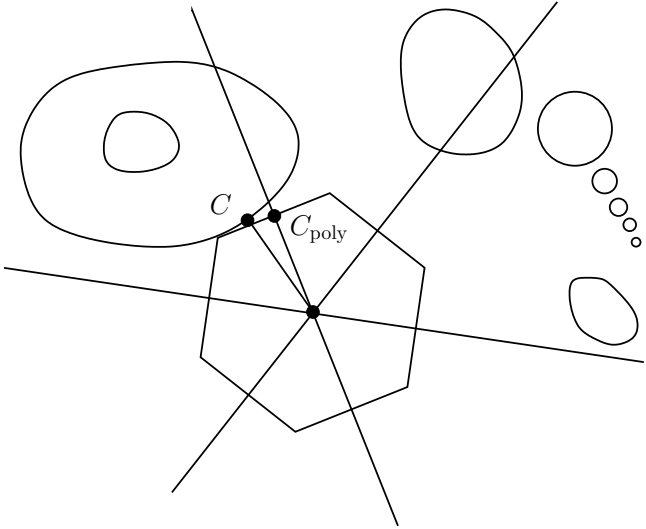}
\caption{A ``blind" generalized linear decoder for VN 3-ary compound channels, with 3 metrics chosen uniformly. The hexagon drawn in the figure is the largest hexagon defined by those uniform metrics that contains the compound set in its complement. This gives the achievable rate with such a decoder, namely $C_{\mathrm{poly}}$ in the figure, whereas the compound capacity is given by the minimum squared norm in the set, i.e. $C$ in the figure.}
\label{poly}
\end{center}
\end{figure}

\section{Conclusion}\label{conclud}

Many Information Theoretic problems evaluate the limiting
performance of a communication scheme by an expression optimizing
divergences under constrained probability distributions. The
divergence is not a formal distance, however, when the distributions
are close to each other, which we had by considering channels to be
very noisy, we are able to make local computations and the
divergence can be approximated by a squared norm. We showed that the
geometry governing this local setting is the one of an inner product
space, where notions of angles and distances are well defined. This
geometric insight simplifies greatly the problems. Rather than
getting a good approximation per-se, it provides a simplified
problem, for which we have a better insight and which points out
solutions to the original problem. It is also a powerful tool for
finding counter-examples. Finally, we showed how in this problem, we
could ``lift'' the results proven locally to results proven
globally.

\section*{Acknowledgment}
The authors wish to thank Emre Telatar, for helpful comments and stimulating discussions.

\bibliography{ludsubmitted}
\bibliographystyle{plain}

\end{document}